\documentclass[11pt,a4paper]{article}

\usepackage{amsmath,amsfonts,amssymb,amsthm}

\usepackage{graphicx,color}
\usepackage{boxedminipage}
\tolerance2000
\newtheorem{theorem}{Theorem}
\newtheorem{observation}{Observation}
\newtheorem{proposition}{Proposition}

\newtheorem{lemma}{Lemma}

\newcommand{\df}{\textrm{\rm def}}

\DeclareMathOperator{\operatorClassNP}{NP}
\newcommand{\classNP}{\ensuremath{\operatorClassNP}}
\DeclareMathOperator{\operatorClassFPT}{FPT}
\newcommand{\classFPT}{\ensuremath{\operatorClassFPT}}


\pagestyle{plain}

\begin{document}

\title{Editing to a Connected Graph of Given Degrees\footnote{Supported by 
  the European Research Council (ERC) via grant Rigorous Theory of Preprocessing, reference 267959.}}

\author{Petr A. Golovach\thanks{Department of Informatics, University of Bergen, PB 7803, 5020 Bergen, Norway. E-mail: {\tt{petr.golovach@ii.uib.no}}}}

\date{}

\maketitle

\begin{abstract}
The aim of edge editing or modification problems is to change a given graph by adding and deleting of a small number of edges in order to satisfy a certain
property. We consider the \textsc{Edge Editing to a Connected Graph of Given Degrees} problem that for a given graph $G$, non-negative integers $d,k$ and a function $\delta\colon V(G)\rightarrow\{1,\ldots,d\}$, asks whether it is  possible to obtain a connected graph $G'$ from $G$ such
 that  the degree of $v$ is $\delta(v)$ for any vertex $v$ by at most $k$ edge editing operations. As the problem is NP-complete even if $\delta(v)=2$, we are interested in the parameterized complexity and show that  \textsc{Edge Editing to a Connected Graph of Given Degrees} admits a polynomial kernel when parameterized by $d+k$. For the special case $\delta(v)=d$, i.e., when the aim is to obtain a connected $d$-regular graph, the problem is shown to be fixed parameter tractable when parameterized by $k$ only. 
\end{abstract}

\section{Introduction}
The aim of graph editing or modification problems is to change a given graph as little as possible by applying specified operations in order to satisfy a certain
property.  Standard operations are vertex deletion, edge deletion, edge addition and edge contraction, but other operations are considered as well.
Various problems of this type are well-known and widely investigated. For example, such problems as  {\sc Clique}, {\sc Independent Set}, {\sc Feedback (Edge or Vertex) Set}, {\sc Cluster Editing} and many others can be seen as graph editing problems. Probably the most extensively studied variants are the problems for hereditary
properties. In particular, Lewis and Yannakakis~\cite{LewisY80} proved that for any non-trivial (in a certain sense) hereditary property,  the corresponding vertex-deletion problem is NP-hard.
The edge-deletion problems were considered by Yannakakis~\cite{Yannakakis78} and  Alon, Shapira and Sudakov~\cite{AlonSS05}. The case where edge additions and deletions are allowed and the property is the inclusion in some hereditary graph class was considered by Natanzon, Shamir and Sharan~\cite{NatanzonSS01} and Burzyn, Bonomo and Dur{\'a}n~\cite{BurzynBD06}.
The results by Cai~\cite{Cai96} and Khot and Raman~\cite{KhotR02} give a characterization of the parameterized complexity. 
For non-hereditary properties, a great deal less is known.

Moser and Thilikos in~\cite{MoserT09} and Mathieson and Szeider~\cite{MathiesonS12} initiated a study of the parameterized complexity of graph editing problems where the aim is to obtain a graph that satisfies degree constraints.  Mathieson and Szeider~\cite{MathiesonS12} considered different variants of  the following problem:
\begin{center}
\begin{boxedminipage}{.99\textwidth}
\textsc{Editing to a Graph of Given Degrees}\\
\begin{tabular}{ r l }
\textit{~~~~Instance:} & A graph $G$, non-negative integers $d,k$ and a function
                                   \\& $\delta\colon V(G)\rightarrow\{0,\ldots,d\}$.\\
\textit{Parameter 1:} & $d$.\\                                  
\textit{Parameter 2:} & $k$.\\
\textit{Question:} & Is it possible to obtain a graph $G'$ from $G$ such
 that \\& $d_{G'}(v)=\delta(v)$ for each $v\in V(G')$ by at most $k$\\& operations from the set $S$?\\
\end{tabular}
\end{boxedminipage}
\end{center}
They classified the parameterized complexity of the problem for $$S\subseteq\{\text{vertex deletion},\text{edge deletion},\text{edge addition}\}.$$ In particular,
they proved that if all the three operations are allowed, then
\textsc{Editing to a Graph of Given Degrees} is \emph{Fixed Parameter Tractable} (\classFPT) when parameterized by $d$ and $k$. Moreover, the FPT result holds for a more general version of the problem where vertices and edges have costs and the degree constraints are relaxed: for each $v\in V(G')$, $d_{G'}(v)$ should be in a given set $\delta(v)\subseteq \{1,\ldots,d\}$. 
Mathieson and Szeider also showed that \textsc{Editing to a Graph of Given Degrees} is polynomial time solvable even if $d$ and $k$ are a part of the input when only edge deletions and edge additions are allowed. 

We are interested in the following natural variant:
\begin{center}
\begin{boxedminipage}{.99\textwidth}
\textsc{Edge Editing to a Connected Graph of Given Degrees}\\
\begin{tabular}{ r l }
\textit{~~~~Instance:} & A graph $G$, non-negative integers $d,k$ and a function \\ & $\delta\colon V(G)\rightarrow\{0,\ldots,d\}$.\\
\textit{Parameter 1:} & $d$.\\                                  
\textit{Parameter 2:} & $k$.\\
\textit{Question:} & Is it possible to obtain a \emph{connected} graph $G'$ from $G$\\& such that  $d_{G'}(v)=\delta(v)$
 for each $v\in V(G')$ by at most\\& $k$ edge deletion and edge addition operations?\\
\end{tabular}
\end{boxedminipage}
\end{center}
We show that this problem is \classFPT\ when parameterized by $d$ and $k$ in Section~\ref{sec:kernel} by demonstrating a polynomial kernel of size $O(kd^3(k+d)^2)$. 
For the special case $\delta(v)=d$ for $v\in V(G)$, we call the problem  \textsc{Edge Editing to a Connected Regular Graph}.
We prove that this problem is \classFPT\ even if it is parameterized by $k$ only in Section~\ref{sec:reg}.

\section{Basic definitions and preliminaries}\label{sec:defs}

\noindent
{\bf Graphs.}
We consider only finite undirected graphs without loops or multiple
edges. The vertex set of a graph $G$ is denoted by $V(G)$ and  
the edge set  is denoted by $E(G)$.

For a set of vertices $U\subseteq V(G)$,
$G[U]$ denotes the subgraph of $G$ induced by $U$, and by $G-U$ we denote the graph obtained from $G$ by the removal of all the vertices of $U$, i.e., the subgraph of $G$ induced by $V(G)\setminus U$. 
For a non-empty set $U$, $\binom{U}{2}$ is the set of unordered pairs of distinct elements of $U$.
Also for $S\subseteq \binom{V(G)}{2}$, we say that $G[S]$ is induced by $S$, if $S$ is the set of edges of $G[S]$ and 
the vertex set of $G[S]$ is the set of vertices of $G$ incident to the pairs from $S$.
By $G-S$ we denote the graph obtained from $G$ by the removal of all the edges of $S\cap E(G)$.
Respectively, for $S\subseteq \binom{V(G)}{2}$, $G+S$ is the graph obtained from $G$ by the addition the edges that are elements of $S\setminus E(G)$.
If $S=\{a\}$, then for simplicity, we write $G-a$ or $G+a$.

For a vertex $v$, we denote by $N_G(v)$ its
\emph{(open) neighborhood}, that is, the set of vertices which are adjacent to $v$, and for a set $U\subseteq V(G)$, $N_G(U)=(\cup_{v\in U}N_G(v))\setminus U$.
The \emph{closed neighborhood} $N_G[v]=N_G(v)\cup \{v\}$, and for a positive integer $r$, $N_G^r[v]$ is the set of vertices at distance at most $r$ from $v$. 
For a set $U\subseteq V(G)$ and a positive integer $r$, $N_G^r[U]=\cup_{v\in U}N_G^r[u]$, and $N_G^r(U)=N_G^r[U]\setminus N_G^{r-1}[U]$ if $r\geq 2$.
The \emph{degree} of a vertex $v$ is denoted by $d_G(v)=|N_G(v)|$, and $\Delta(G)$ is the maximum degree of $G$. 

A  \emph{trail} in $G$ is a sequence $P=v_0,e_1,v_1,e_2,\ldots,e_s,v_s$ of vertices and edges of $G$ such that $v_0,\ldots,v_s\in V(G)$, $e_1,\ldots,e_s\in E(G)$,
the edges $e_1,\ldots,e_s$ are pairwise distinct, and 
for $i\in\{1,\ldots,s\}$, $e_i=v_{i-1}v_i$; $v_0,v_s$ are the \emph{end-vertices} of the trail. 
A trail is \emph{closed} if $v_0=v_s$. 
For $0\leq i<j\leq s$, we say that $P'=v_i,e_{i+1},\ldots,e_j,v_j$ is a \emph{segment} of $P$.
A trail is a \emph{path} if $v_0,\ldots,v_s$ are pairwise distinct except maybe $v_0,v_s$.
Sometimes we write $P=v_0,\ldots,v_s$ to denote 
 a trail $P=v_0,e_1,\ldots,e_s,v_s$ omitting edges.

A set of vertices $U$ is a \emph{cut set} of $G$ if $G-U$ has more components than $G$. A vertex $v$ is a \emph{cut vertex} if $S=\{v\}$ is a cut set. 
An edge $uv$ is a \emph{bridge} of a connected graph $G$ if $G-uv$ is disconnected.
A graph is said to be \emph{unicyclic} if it has exactly one cycle.

A set $M$ of pairwise non-adjacent edges is called a \emph{matching}, and for a bipartite graph $G$ with the given bipartition $X,Y$ of $V(G)$, a matching $M$ is \emph{perfect} (with respect to $X$) if each vertex of $X$ is incident to an edge of $M$.

\medskip
\noindent
{\bf Parameterized Complexity.}
Parameterized complexity is a two dimensional framework
for studying the computational complexity of a problem. One dimension is the input size
$n$ and the other is a parameter $k$. It is said that a problem is \emph{fixed parameter tractable} (or \classFPT), if it can be solved in time $f(k)\cdot n^{O(1)}$ for some function $f$.
A \emph{kernelization} for a parameterized problem is a polynomial algorithm that maps each instance $(x,k)$ with the input $x$ and the parameter $k$ to an instance $(x',k')$ such that i) $(x,k)$ is a YES-instance if and only if $(x',k')$ is a YES-instance of the problem, and ii) the size of $x'$ is bounded by $f(k)$ for a computable function $f$. 
The output $(x',k')$ is called a \emph{kernel}. The function $f$ is said to be a \emph{size} of a kernel. Respectively, a kernel is \emph{polynomial} if $f$ is polynomial. 
We refer to the books of 
Downey and Fellows~\cite{DowneyF13}, 
Flum and Grohe~\cite{FlumG06} and   Niedermeier~\cite{Niedermeierbook06} for  detailed introductions  to parameterized complexity.

\medskip
\noindent
{\bf Solutions of \textsc{Edge Editing to a Connected Graph of Given Degrees}.}
Let $(G,\delta,d,k)$ be an instance of \textsc{Edge Editing to a Connected Graph of Given Degrees}. 
Suppose that a connected graph $G'$ is obtained  from $G$ by at most $k$ edge deletions and edge additions such that $d_{G'}(v)=\delta(v)$ for $v\in V(G')$. 
Denote by $D$ the set of deleted edges and by $A$ the set of added edges. We say that $(D,A)$ is a \emph{solution} of \textsc{Edge Editing to a Connected Graph of Given Degrees}.
We also say that the graph $G\rq{}=G-D+A$ is obtained by editing with respect to $(D,A)$.

We need the following structural observation about solutions of \textsc{Edge Editing to a Connected Graph of Given Degrees}. 
Let $(D,A)$ be a solution for $(G,\delta,d,k)$ and let $G'=G-D+A$. We say that a trail $P=v_0,e_1,v_1,e_2,\ldots,e_s,v_s$  in $G'$ is \emph{$(D,A)$-alternating} if $e_1,\ldots,e_s\subseteq D\cup A$ 
 and for any $i\in\{2,\ldots,s\}$, either $e_{i-1}\in D,e_i\in A$ or $e_{i-1}\in A,e_i\in D$. 
We say that a $(D,A)$-alternating trail $P=v_0,e_1,v_1,e_2,\ldots,e_s,v_s$ is \emph{closed}, if $v_0=v_s$ and $s$ is even.
Notice that if $v_0=v_s$ but $s$ is odd, then such a trail is not  closed. 
Let $H(D,A)$ be the graph with the edge set $D\cup A$, and 
the vertex set of $H$ consists of vertices of $G$ incident to the edges of $D\cup A$.
Let also $Z=\{v\in V(G)|d_G(v)\neq \delta(v)\}$. 

\begin{lemma}\label{lem:alt}
For any solution $(D,A)$, the following holds.
\begin{itemize}
\item[i)] $Z\subseteq V(H(D,A))$.
\item[ii)] For any $v\in V(H(D,A))\setminus Z$, $|\{e\in D|e\text{ is incident to }v\}|=$\\ $|\{e\in A|e\text{ is incident to }v\}|$.
\item[iii)] For any $z\in Z$, $d_G(z)-\delta(z)=|\{e\in D|e\text{ is incident to }z\}|-\linebreak |\{e\in A|e\text{ is incident to }z\}|$.
\item[iv)] The graph $H(D,A)$ can be covered by a family of edge-disjoint $(D,A)$-alternating 
trails $\mathcal{T}$ (i.e., each edge of $D\cup A$ is in the unique trail of $\mathcal{T}$) and each non-closed trail in $\mathcal{T}$ has its end-vertices in $Z$. Also $\mathcal{T}$ can be constructed in polynomial time.
\end{itemize}
\end{lemma}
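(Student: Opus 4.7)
Parts (i)--(iii) come from one degree-counting identity. For each $v$, writing $r(v) = |\{e \in D : v \in e\}|$ and $b(v) = |\{e \in A : v \in e\}|$, we have $d_{G'}(v) = d_G(v) - r(v) + b(v) = \delta(v)$. If $v \notin V(H(D,A))$, then $r(v) = b(v) = 0$, forcing $d_G(v) = \delta(v)$ and $v \notin Z$, which is (i). Substituting $d_{G'}(v) = \delta(v)$ into the identity and splitting into the cases $v \notin Z$ (so $d_G(v) = \delta(v)$) and $v \in Z$ (so $d_G(v) \neq \delta(v)$) yields (ii) and (iii) respectively.

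For (iv), I would manufacture $\mathcal{T}$ from local matchings. At each $v \in V(H(D,A))$, fix an arbitrary maximum matching $\mu_v$ between the $D$-edges and the $A$-edges incident with $v$; it has size $\min(r(v),b(v))$. By (ii), $\mu_v$ saturates every edited edge when $v \notin Z$; by (iii), it leaves exactly $|d_G(v) - \delta(v)|$ majority-colour edges unsaturated when $v \in Z$ (call these \emph{free at $v$}), and only $Z$-vertices can have free incident edges. Trails are then extracted in two phases. While an unused free edge exists at some $z \in Z$, start a trail at $z$ along that edge and, at each vertex subsequently entered via edge $f$, leave along $\mu_v(f)$; halt as soon as the entering edge is free at the arrival vertex. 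Once no unused free edge remains, any still-unused edge is saturated at both endpoints, so tracing from any such edge at vertex $v_0$ via $\mu$-partners yields a trail that eventually returns to $v_0$ and closes. Each trail is $(D,A)$-alternating because $\mu$-partners have opposite colours, and the procedure is plainly polynomial.

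The main obstacle I expect is justifying the following invariant, on which both phases rest: whenever the tracing arrives at a vertex $v$ via an unused edge $f$ and $\mu_v(f)$ is defined, $\mu_v(f)$ must also still be unused. Indeed, any earlier traversal of $\mu_v(f)$ through $v$ would by construction have paired $\mu_v(f)$ with $f$ and thereby already used $f$, contradicting the hypothesis. Granted this invariant, a first-phase trail cannot halt at an interior vertex, so it must halt on reaching a free edge, forcing both endpoints to lie in $Z$; a second-phase trail cannot halt at an interior vertex either, so it must halt at its starting vertex $v_0$ upon arriving via the distinguished $\mu_{v_0}$-partner $e_s = \mu_{v_0}(e_1)$ of the first edge (the only way to find a matched partner already used). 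Since $e_1$ and $e_s$ then have opposite colours and the trail alternates, $s$ is even, so the trail is closed in the sense of the definition. Iterating until no edge of $D \cup A$ remains gives the required decomposition $\mathcal{T}$.
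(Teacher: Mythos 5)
Your proof is correct and follows essentially the same route as the paper: the degree-count identity $d_{G'}(v)=d_G(v)-r(v)+b(v)=\delta(v)$ for (i)--(iii), and a greedy extraction of alternating trails (starting from $Z$-vertices first, then closing the remainder) for (iv). Your formalization via the local matchings $\mu_v$ and the unused-partner invariant is in fact more careful than the paper's one-paragraph ``greedy works'' argument, which leaves the termination-in-$Z$ and closure claims implicit.
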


\begin{proof}
The claims i)--iii) are strightforward.
Because for each $v\in V(H(D,A))\setminus Z$, $|\{e\in D|e\text{ is incident to }v\}|=|\{e\in A|e\text{ is incident to }v\}|$, 
we can construct $(D,A)$-alternating trails that cover $H(D,A)$ in a greedy way. We construct trails by adding edges to each trail consecutively while it is possible. Then we start another trail until all the edges of $H(D,A)$ are covered. We choose vertices of $Z$ as starting points of trails in the beginning. If all the edges of $H(D,A)$ incident to the vertices of $Z$ are covered, then we select arbitrary vertex of $H(D,A)$ incident to uncovered edge.    
\end{proof}

\medskip
\noindent
{\bf Hardness of \textsc{Edge Editing to a Connected Graph of Given Degrees}.}
As we are interested in FPT results, we conclude this section by the observations about the classical complexity of the considered problems. 
Recall that Mathieson and Szeider proved in~\cite{MathiesonS12} that \textsc{Editing to a Graph of Given Degrees} is polynomial time solvable even if $d$ and $k$ are a part of the input when only edge deletions and edge additions are allowed. But if the obtained graph should be connected then the problem becomes \classNP-complete by an easy reduction from the {\sc Hamiltonicity} problem.

\begin{proposition}\label{prop:NPc}
For any fixed $d\geq 2$, \textsc{Edge Editing to a Connected Regular Graph} is \classNP-complete.
\end{proposition}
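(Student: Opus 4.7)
My plan is to first verify membership in \classNP{} (immediate, since any solution $(D,A)$ is a polynomial-size certificate whose validity---incidence with $G$, prescribed degree sequence, and connectivity of $G-D+A$---can be checked in polynomial time) and then to establish \classNP-hardness by reducing from \textsc{Hamiltonian Cycle}. I will carry out the reduction for $d=2$ in detail, since it captures the whole idea, and only sketch how to lift it to arbitrary fixed $d\geq 3$ via a gadget.

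For $d=2$, given a \textsc{Hamiltonian Cycle} instance $G$ with $n$ vertices and $m$ edges (I may assume $m\geq n$, since otherwise $G$ is trivially non-Hamiltonian and I output any fixed NO-instance), I would output the \textsc{Edge Editing to a Connected Regular Graph} instance $(G,\delta\equiv 2,d=2,k=m-n)$. The forward direction is immediate: any Hamiltonian cycle $C$ of $G$ yields the solution $(D,A)=(E(G)\setminus E(C),\emptyset)$ of cost exactly $k$, and the resulting graph is $C$ itself, which is connected and $2$-regular.

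The crux is the backward direction, which will rest on a short counting argument. For any solution $(D,A)$, the graph $G'=G-D+A$ is $2$-regular on $n$ vertices, so $|E(G')|=n$ and hence $m-|D|+|A|=n$, i.e.\ $|D|-|A|=m-n$. Combining this with the budget $|D|+|A|\leq m-n$ forces $|A|=0$ and $|D|=m-n$, so $G'$ must be a connected $2$-regular spanning subgraph of $G$---precisely a Hamiltonian cycle of $G$.

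For fixed $d\geq 3$, my plan is to apply the same template after attaching to each vertex $v\in V(G)$ a rigid $d$-regular anchor gadget on $d-2$ private new neighbours of $v$, with the budget $k$ tuned so that it just pays for deleting the non-cycle edges of $G$. The step I expect to be the main obstacle is proving gadget rigidity---namely, that no solution can save edits by rerouting edges through the new vertices and thereby escape the correspondence with a Hamiltonian cycle of $G$---but this should follow from the same local counting argument applied vertex-by-vertex inside each gadget, forcing $|A|=0$ and pinning $|D|$ to the non-cycle edges of the original graph.
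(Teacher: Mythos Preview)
Your $d=2$ argument is correct and coincides with the paper's: the paper also reduces from \textsc{Hamiltonicity} with budget $k=m-n$, noting that a connected $2$-regular graph on $n$ vertices reachable within $m-n$ operations must in fact arise from deletions only (you spell out the edge-count identity $|D|-|A|=m-n$ that makes this precise; the paper leaves it implicit in the word ``equivalently'').

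For $d\geq 3$ both you and the paper give only a sketch, but with different wrappers: the paper phrases the extension as an induction from the $d=2$ base case, whereas you attach per-vertex degree-boosting gadgets directly. Either route is standard. One remark on your sketch: the phrase ``$d$-regular anchor gadget on $d-2$ private new neighbours'' cannot be taken literally (for $d=3$ a single private neighbour cannot itself have degree~$3$), so when you flesh this out you will need a larger gadget---e.g.\ a copy of $K_{d+1}$ minus a perfect matching, or simply invoke that \textsc{Hamiltonian Cycle} is already \classNP-complete on $3$-regular graphs and pad from there. Your identification of gadget rigidity (forcing $A=\emptyset$ via the same global edge count) as the only real obstacle is correct.
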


\begin{proof}
For simplicity, we show the claim for $d=2$. It is sufficient to observe that a graph $G$ with $n$ vertices and $m\geq n$ edges is Hamiltonian if and only if a cycle on $n$ vertices can be obtained by $m-n$ edge deletions or, equivalently, by at most $m-n$ edge deletion and edge addition operations. Since \textsc{Hamiltonicity} is a well-known \classNP-complete problem~\cite{GareyJ79}, the claim follows. Using the claim for $d=2$ as a base case, it is straightforward to show inductively that the statement holds for any $d\geq 2$.   
\end{proof}

\section{Polynomial kernel for Edge Editing to a Connected Graph of Given Degrees}\label{sec:kernel}
In this section we construct a polynomial kernel for \textsc{Edge Editing to a Connected Graph of Given Degrees} and prove the following theorem

\begin{theorem}\label{thm:kernel}
\textsc{Edge Editing to a Connected Graph of Given Degrees}  has a kernel of size $O(kd^3(k+d)^2)$.
\end{theorem}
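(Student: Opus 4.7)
My plan is to construct the kernel via a sequence of reduction rules that shrink the instance to polynomial size in $k+d$, followed by a replacement argument proving equivalence.

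First I would establish the easy preprocessing bounds coming from Lemma~\ref{lem:alt}. Since $Z\subseteq V(H(D,A))$ and $|E(H(D,A))|\leq k$, we have $|Z|\leq 2k$; otherwise reject. Every $z\in Z$ requires at least $|d_G(z)-\delta(z)|$ incident edges of $D\cup A$, so if $|d_G(z)-\delta(z)|>k$ or $d_G(z)>d+k$, reject. I would also bound the number of connected components of $G$: because $G'$ is connected, all but at most one component of $G$ must contain a vertex of $V(H(D,A))$, so the number of components is at most $2k+1$; otherwise reject. Finally, any component of $G$ that is already entirely ``correct'' (every vertex has $d_G(v)=\delta(v)$) either is the unique component and we are essentially done, or it must be opened by at least one deletion and re-attached by at least one addition, costing two edits per such component.

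The core of the kernel is a marking scheme that identifies a small set $M\subseteq V(G)$ of vertices that are ``useful'' in some solution. For every $z\in Z$, the candidates for becoming neighbors of $z$ in $G'$ can be classified by a constant-plus-$O((k+d)^2)$ amount of type information (current degree, target degree, and membership in one of the $O(k)$ components). For each type I would retain $O(d)$ representatives, giving $O(d(k+d)^2)$ candidates per element of $Z$, hence $O(kd(k+d)^2)$ across all of $Z$. To also capture internal vertices of the alternating trails from Lemma~\ref{lem:alt}(iv), I would iterate the marking around each already-marked vertex, again keeping $O(d)$ vertices of each type; because the trails have length at most $k$ but each vertex has degree at most $d+k$ with only $O(d)$ ``types'' that matter, two rounds of iteration (each blowing up by a factor of $d$) suffice. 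This yields $|M|=O(kd^3(k+d)^2)$.

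After restricting to $G[M]$ (adjusting degree targets on a thin ``boundary'' of deleted subgraphs that were already correct), I would prove the replacement lemma: a solution on the reduced instance lifts trivially, and conversely any solution on $G$ can be rewritten, using the alternating-trail decomposition of Lemma~\ref{lem:alt}(iv), to one whose edits stay inside $M$ by swapping an unmarked neighbor or intermediate trail vertex with a marked vertex of the same type. The main obstacle is the \emph{connectivity} side of this lift: a local type-preserving swap trivially maintains degrees by parts (ii)--(iii) of Lemma~\ref{lem:alt}, but it can split an alternating closed trail into two trails or merge two trails into one, potentially disconnecting $G'$. I would overcome this by designing the marking so that, for each component and each combination of relevant types, $M$ contains enough ``connector'' witnesses to re-route trails without altering the component structure of $G-D$ before $A$ is added back in; this extra safety margin is precisely what forces the $d^3$ factor in the bound.
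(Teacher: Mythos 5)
There is a genuine gap, and it sits exactly where you flag ``the main obstacle'': your marking scheme never actually handles the large portions of $G$ in which every vertex already has the correct degree. Your marked set $M$ is seeded from $Z$, its neighborhoods, and the internal vertices of alternating trails --- but the trails are determined by a hypothetical solution, not by the instance, so they cannot drive a marking rule; and a huge tree component (or a huge region of a component far from $N_G^2[Z]$) contains no vertices of any distinguished ``type'' at all, so your scheme either discards it entirely or keeps it whole. Neither is sound. Such regions cannot be discarded because a solution may need to delete an edge deep inside them and reattach the resulting deficient endpoints for connectivity, and the \emph{size and structure} of the region governs how expensive that is: deleting $s$ edges of a tree creates $s+1$ pieces that must all be reconnected within budget, whereas a region with $k/3$ independent non-bridge edges lets you delete edges for free with respect to connectivity. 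The paper's kernel is built precisely around this case distinction: it finds a matching of size $\lfloor k/3\rfloor$ avoiding a spanning tree and replaces the bulk of the component by a fixed gadget (Rule~2, justified by Lemma~\ref{lem:sel-edges}, which reroutes the alternating trails so that all deletions inside the component land on matching edges and all additions leave the component or stay in $N_G(Z)$), and it separately handles large tree components, unicyclic components, and large pendant trees at cut vertices by replacing them with canonical paths and cycles (Rules~3--7, justified via Lemma~\ref{lem:rearrange-comp}). Your proposal contains no analogue of any of this; the sentence promising ``enough connector witnesses to re-route trails'' is the entire content of the hard direction and is not substantiated.

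A secondary problem is your step ``restricting to $G[M]$, adjusting degree targets on a thin boundary of deleted subgraphs that were already correct.'' Deleting unmarked vertices and resetting $\delta$ on the boundary to the new degrees is only safe if the deleted part is replaced by a gadget that simulates its two relevant capabilities: supplying roughly $k/3$ connectivity-preserving deletable edges, and preserving which boundary pieces are connected to which (the paper's gadget explicitly keeps one attachment vertex $u_j$ per component $F_j$ of $G[N_G^2[Z]\cap V(G_i)]$ and threads them onto a path). Without that simulation the reduced instance can flip between YES and NO. So while your overall architecture (preprocess $Z$, bound components, shrink the degree-correct bulk, prove a replacement lemma via alternating trails) matches the paper's in spirit, the proof as written is missing the structural case analysis and the gadget constructions that constitute the actual argument.
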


\subsection{Technical lemmas}
We need some additional terminology. Suppose that for each $v\in V(G)$, $d_G(v)\leq \delta(v)$. If $d_G(v)<\delta(v)$, we say that $v$ is a \emph{terminal}, and $\df(v)=\delta(v)-d_G(v)$ is called a \emph{deficit} of $v$.  The \emph{deficit} $\df(G)$  of $G$ is the sum of deficits of the terminals.
We say that added edges \emph{satisfy} the deficit of a terminal $v$, if the degree of $v$ becomes equal to $\delta(v)$. 

We use the following straightforward observation.

\begin{observation}\label{obs:comp}
Let $(A,D)$ be a solution for an instance $(G,\delta,d,k)$. Then $G-D$ has at most $|A|+1$ components. 
\end{observation}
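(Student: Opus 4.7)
The plan is to exploit the obvious fact that adding one edge to a graph can reduce the number of connected components by at most one, together with the assumption that the edited graph $G' = G - D + A$ is connected (so it has exactly one component).

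First I would let $c$ denote the number of connected components of $G - D$. The graph $G'$ is obtained from $G - D$ by the addition of the $|A|$ edges in $A$ (note that $A$ is disjoint from $E(G-D)$ since $A$ consists of edges added to $G$). I would then add the edges of $A$ to $G - D$ one at a time. At each step, the newly inserted edge either has its two endpoints in the same component of the current graph (leaving the component count unchanged) or in two different components (merging them and decreasing the component count by exactly one). Hence after inserting all $|A|$ edges, the number of components has dropped by at most $|A|$, so $G'$ has at least $c - |A|$ components.

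Since $G'$ is connected by the definition of a solution, it has exactly one component, giving $c - |A| \leq 1$, i.e., $c \leq |A| + 1$. This is the desired bound.

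There is essentially no obstacle here: the argument is a one-line application of the standard fact that edge insertion is $1$-Lipschitz with respect to the component-count function. The only small care needed is to observe that the edges of $A$ are genuinely absent from $G - D$ (so that each addition is a real edge insertion, not a no-op), which is immediate from the definition of $(D,A)$ as a solution.
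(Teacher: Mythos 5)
Your argument is correct and is exactly the standard reasoning the paper has in mind: it states this as a ``straightforward observation'' without proof, and the intended justification is precisely that each edge addition decreases the component count by at most one while $G-D+A$ is connected. Nothing further is needed.
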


We also need the following lemma.

\begin{lemma}\label{lem:rearrange-comp}
Let $(G,\delta,d,k)$ be an instance of \textsc{Edge Editing to a Connected Graph of Given Degrees}. Suppose that there is a set $D\subseteq E(G)$ and a set $A\subseteq \binom{V(G)}{2}$ such that $G'=G-D$ has $r$ components and the following holds:
\begin{itemize}
\item[i)] for any vertex $v$ of $G''=G-D+A$, $d_{G''}(v)=\delta(v)$,
\item[ii)] $|D|+|A|\leq k$,
\item[iii)] for each component $F$ of $G'$, $\df(F)>0$,
\item[iv)] $|A|\geq r-1$. 
\end{itemize}
Then  $(G,\delta,d,k)$ has a solution.
\end{lemma}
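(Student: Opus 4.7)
The plan is to iteratively modify the pair $(D,A)$ by local two-for-two edge swaps that preserve both the degree function and the total edit count $|D|+|A|$, while strictly decreasing the number of components of $G-D+A$. If $G''=G-D+A$ is already connected, then $(D,A)$ is itself a solution, so assume $G''$ has components $K_1,\ldots,K_t$ with $t\geq 2$; note that every edge of $A$ lies inside some $K_i$, since adding an edge of $A$ cannot separate a component.

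The main structural step is a counting argument producing a non-bridge $A$-edge. Let $k_i$ be the number of components of $G'$ contained in $K_i$, so $\sum_i k_i=r$. Contracting each $G'$-component inside $K_i$ to a single vertex yields a connected multigraph $M_i$ on $k_i$ vertices whose edges are exactly the $A$-edges inside $K_i$, and an $A$-edge inside $K_i$ is a bridge of $K_i$ if and only if the corresponding edge of $M_i$ is a non-loop bridge. The bridges of a connected multigraph on $k_i$ vertices form a forest and hence number at most $k_i-1$, so summing gives at most $\sum_i(k_i-1)=r-t$ bridge $A$-edges in total; combined with $|A|\geq r-1$, this produces at least $|A|-(r-t)\geq t-1\geq 1$ non-bridge $A$-edges.

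Fix such a non-bridge $A$-edge $uv$ lying inside some $K_i$ and choose any other component $K_j$. Hypothesis (iii) supplies a vertex $z_j\in V(K_j)$ with $d_G(z_j)<\delta(z_j)$, and the equation $d_{G''}(z_j)=\delta(z_j)$ then forces $z_j$ to be incident to at least one added edge $z_jw_j\in A$, necessarily with $w_j\in V(K_j)$. Define $A':=(A\setminus\{uv,z_jw_j\})\cup\{uz_j,vw_j\}$. Because $V(K_i)\cap V(K_j)=\emptyset$, the four vertices $u,v,z_j,w_j$ are distinct; the new edges $uz_j$ and $vw_j$ cross between components of $G''$, so they are absent from $G-D+A$ and thus represent genuine additions. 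A direct degree check at $u,v,z_j,w_j$ (each loses one incident edge in $A$ and gains one) preserves hypothesis (i); $|A'|=|A|$ preserves (ii) and (iv); (iii) refers only to $G-D$, which is unchanged. For connectivity, removing the non-bridge $uv$ keeps $K_i$ connected, while removing $z_jw_j$ splits $K_j$ into at most two pieces, both of which are reattached to $K_i$ via $uz_j$ and $vw_j$; hence $K_i\cup K_j$ becomes a single component of $G-D+A'$.

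Iterating at most $t-1$ such swaps produces a pair satisfying (i)--(iv) with the resulting graph connected, which is the desired solution. The main obstacle is the counting step locating a non-bridge $A$-edge; once this is in hand, the swap and the four-way degree check are routine. The only minor wrinkle is that one of the two new edges might happen to coincide with an edge already in $D$, but in that case we may simply delete that edge from both $D$ and $A'$ without changing the final graph, yielding an even more economical solution.
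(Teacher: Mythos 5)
Your proof is correct and takes essentially the same route as the paper: both hinge on the two-for-two swap that removes a non-bridge added edge $u_1v_1$ from one component of $G''$ and an added edge $u_2v_2$ from another (whose existence follows from the positive deficit) and inserts the cross edges $u_1u_2,v_1v_2$, merging the two components while preserving degrees and the edit budget. The only differences are presentational — the paper argues by choosing $A$ to minimize the number of components and deriving a contradiction where you iterate the swap directly, and you give an explicit contraction-and-bridge-counting argument for the existence of a non-bridge $A$-edge, a step the paper merely asserts.
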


\begin{proof}
Observe that $(D,A)$ satisfies to all conditions for solution of $(G,\delta,d,k)$ except connectivity of $G''$. Suppose that a set $A$ is chosen in such a way that the number of components of $G''$ is minimum. If $G''$ is connected, then $(D,A)$ is a solution. Assume that it is not the case, i.e., $G''$ is disconnected. Because $|A|\geq r-1$, $G''$ has a component $F$ such that there is an edge  $u_1v_1\in A$ with the property that $u_1,v_1\in V(F)$ and $u_1v_1$ is not a bridge of $F$. Let $F'$ be another component of $G''$. Because $\df(F')>0$  and for any vertex $v$ of $G''$, $d_{G''}(v)=\delta(v)$, there is and edge $u_2v_2\in A$ such that $u_2,v_2\in V(F')$. Let $A'=(A\setminus\{u_1v_1,u_2v_2\})\cup\{u_1u_2,v_1v_2\}$. Observe that $A'$ satisfies i),ii) and iv),  but $G-D+A'$ has less components that $G''$ contradicting the choice of $A$. Therefore, $G''$ is connected. 
\end{proof}

\subsection{Construction of the kernel}
Let $(G,\delta,d,k)$ be an instance of \textsc{Edge Editing to a Connected Graph of Given Degrees}. 
We assume without loss of generality that $d\geq 3$ (otherwise, we let $d=3$). 
Let 
$Z=\{v\in V(G)|d_G(v)\neq \delta(v)\}$
and $s=\sum_{v\in V(G)}|d_G(v)-\delta(v)|$.

First, we apply the following rule.

\medskip
\noindent
{\bf Rule 1.} If $|Z|>2k$ or $s$ is odd or $s>2k$ or $G$ has at least $k+2$ components, then stop and return a NO-answer.

\medskip
It is straightforward to see that the rule is safe, because each edge deletion (addition respectively) decreases (increases respectively) the degrees of two its end-vertices by one. 
Also it is clear that if $G$ has at least $k+2$ components, then at least $k+1$ edges should be added to obtain a connected graph.

From now without loss of generality we  assume that $|Z|\leq 2k$, $s\leq 2k$ 
and $G$ has at most $k+1$ components. Denote by $G_1,\ldots,G_p$ the components of $G-Z$.

Now we need some structural properties of solutions of \textsc{Edge Editing to a Connected Graph of Given Degrees}. Suppose that $(G,\delta,d,k)$ is a YES-instance, and $G'$ is a graph obtained from $G$ by the minimum number of edge deletions and edge additions such that $d_{G'}(v)=\delta(v)$ for $v\in V(G')$. Denote by $D$ the set of deleted edges and by $A$ the set of added edges. Suppose that for some $i\in \{1,\ldots,p\}$, $G_i$ has a matching $M=\{a_1,\ldots,a_s\}$ 
of size $s=\lfloor k/3\rfloor$
such that 
 $a_1,\ldots,a_s\in E(G_i-N_G^2[Z]\cap V(G_i))$  and $G_i-M$ is connected. We show that in this case we have a solution with some additional properties. 

\begin{lemma}\label{lem:sel-edges}
\textsc{Edge Editing to a Connected Graph of Given Degrees} has a solution $(D',A')$ such that 
\begin{itemize}
\item[i)] $D'\setminus E(G_i)=D\setminus E(G_i)$ and $D'\cap E(G_i)\subseteq \{a_1,\ldots,a_k\}$;
\item[ii)] for any  $uv\in A'$ such that $u\in V(G_i)$, either $u,v\in N_G(Z)$ or $v\notin V(G_i)$; 
\item[iii)] $|D'|\leq |D|$ and $|A'|\leq |A|$. 
\end{itemize}
\end{lemma}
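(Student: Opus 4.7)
The plan is to transform $(D,A)$ into $(D',A')$ satisfying (i)--(iii) by a sequence of local exchange operations, each of which preserves degrees, does not increase $|D|$ or $|A|$, and keeps the edited graph connected. The conceptual picture is that any ``bad'' edit inside $G_i$ (a deletion of an edge outside $M$, or an addition both of whose endpoints lie in $V(G_i)$ but not both in $N_G(Z)$) can be re-routed using the matching $M$ together with the slack provided by the fact that $G_i-M$ is connected.

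First, I would invoke Lemma~\ref{lem:alt} to decompose $H(D,A)$ into a family $\mathcal{T}$ of edge-disjoint $(D,A)$-alternating trails, each non-closed trail having both endpoints in $Z$. Since every $v\in V(G_i)$ satisfies $d_G(v)=\delta(v)$, the $D$- and $A$-degrees at $v$ coincide, so any trail of $\mathcal{T}$ meeting $V(G_i)$ enters and leaves this set either through an $A$-edge (possibly incident to $N_G(Z)$) or through a $D$-edge of $G$ joining $V(G_i)$ and $Z$. Consequently, the portion of such a trail lying deep inside $G_i$ (far from $N_G^2[Z]$) uses $D$-edges exclusively from $E(G_i)$, and at each deep interior vertex the trail ``balances'' locally.

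Next, I would handle the two kinds of defect separately. For an addition $uv\in A$ with $u,v\in V(G_i)$ and $\{u,v\}\not\subseteq N_G(Z)$: at least one endpoint, say $u$, is deep, so the $(D,A)$-trail through $uv$ must continue into $G_i$ via a $D$-edge of $E(G_i)$; I would reroute this excursion so that the net effect inside $G_i$ is realised by (a) deletions drawn from $M$ and (b) additions that either lie between $N_G(Z)$-vertices or leave $V(G_i)$ altogether, the latter being possible because $G$ has at most $k+1$ components available to accept a crossing edge. For a deletion $uv\in D\cap E(G_i)$ with $uv\notin M$: I would substitute $uv$ by a fresh matching edge $a_j\in M$ and compensate by a corresponding small modification of $A$ nearby, exploiting connectedness of $G_i-M$ to guarantee that the rerouting can be carried out without creating new obstacles. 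In both cases, the exchange is chosen to strictly decrease the number of defective edits while preserving $|D|$ and $|A|$ (and in fact never increasing either).

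The main obstacle is the bookkeeping that simultaneously achieves (i)--(iii) and keeps $G-D'+A'$ connected. Supply of matching edges is not the issue, because $|D|+|A|\le k$ and $|M|=\lfloor k/3\rfloor$ is ample for the bounded number of exchanges that any single trail can force. The delicate part is guaranteeing that each exchange respects the degree constraints at every vertex (so the degree condition is never broken intermediately) and does not disconnect $G''$; the latter is handled by combining the exchange with an application of Lemma~\ref{lem:rearrange-comp} to re-establish connectivity whenever a rerouting threatens to separate a component, using that $G_i-M$ remains connected and that additions can be redirected to bridge components.
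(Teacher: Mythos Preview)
Your starting point---the alternating-trail decomposition from Lemma~\ref{lem:alt}---is exactly right, but the ``fix one defect at a time'' strategy has a real gap. When you replace a single bad deletion $uv\in D\cap E(G_i)$ by a matching edge $a_j=xy$, you must rebalance the $A$-degrees at all four vertices $u,v,x,y$; the $A$-edges you move from $u,v$ to $x,y$ may themselves become new violations of~(ii), and it is not clear your process decreases any well-founded measure. Likewise, rerouting a bad addition ``so that the net effect inside $G_i$ is realised by deletions drawn from $M$'' is asserted but not constructed, and the remark about ``at most $k+1$ components available to accept a crossing edge'' is beside the point. Finally, invoking Lemma~\ref{lem:rearrange-comp} to repair connectivity at the end is a red flag: that lemma may change $A'$ in ways that re-introduce edges violating~(ii).

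The paper avoids all of this by working trail-by-trail rather than defect-by-defect. One isolates the maximal segments $S=v_0,e_1,\ldots,e_s,v_s$ of trails in $\mathcal{T}$ with $v_0,v_s\notin V(G_i)$ and $v_1,\ldots,v_{s-1}\in V(G_i)$, and one also isolates the (necessarily closed) trails lying entirely inside $V(G_i)$. The latter are simply discarded: they contribute nothing to degrees or connectivity. Each segment $S$ is then replaced \emph{wholesale} by a gadget of at most three edges, using at most one fresh matching edge $a_j=xy\in M$: for instance, if $e_1,e_s\in D$ (so $v_1,v_{s-1}\in N_G(Z)$) and $s\ge 5$, keep $e_1,e_s$ in $D'$, put $a_j$ in $D'$, and put $v_1x,\,yv_{s-1}$ in $A'$; the cases $e_1\in D,e_s\in A$ and $e_1,e_s\in A$ are similar. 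Non-adjacency of the new $A'$-edges is guaranteed because $x,y\notin N_G^2[Z]$ while $v_1,v_{s-1}\in N_G(Z)$ or lie outside $V(G_i)$. This construction enforces (i) and (ii) by design, gives $|D'|\le|D|$ and $|A'|\le|A|$ since each segment of length $s\ge 3$ is replaced by at most three edges, and yields the exact matching count you waved at: each segment consuming a matching edge has $s\ge 3$, so there are at most $k/3$ of them. Connectivity of $G-D'+A'$ then follows \emph{directly}, with no appeal to Lemma~\ref{lem:rearrange-comp}, because the only edges of $G_i$ deleted are in $M$ and $G_i-M$ is connected by hypothesis.
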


\begin{proof}
 By Lemma~\ref{lem:alt}, $H(D,A)$ can be covered by a family of edge-disjoint $(D,A)$-alternating 
trails $\mathcal{T}$, and each non-closed trail in $\mathcal{T}$ has its end-vertices in $Z$. Denote by $\mathcal{S}$ the set of all subtrails $S=v_0,e_1,v_1,\ldots,e_s,v_s$
of trails from $\mathcal{T}$ such that $s\geq 2$, $v_0,v_s\notin V(G_i)$ and $v_1,\ldots,v_{s-1}\in V(G_i)$. Let also $\mathcal{C}$ be the set of all trails of $\mathcal{T}$ with all the vertices in $V(G_i)$. We construct $D'$ and $A'$ as follows.

First, we put in $D'$ and $A'$ all edges of $D$ and $A$ respectively that are not included in trails from $\mathcal{S}$ and $\mathcal{C}$. Then we consecutively consider trails $S\in\mathcal{S}$. Let $S=v_0,e_1,v_1,\ldots,e_s,v_s$.
We have three cases.

\medskip
\noindent
{\bf Case 1.}  $e_1,e_2\in D$. Notice that $v_0,v_s\in Z$ and $v_1,v_{s-1}\in N_G(Z)\cap V(G_i)$  in this case. Observe also that because $S$ is a $(D,A)$-alternating path, $s\geq 3$ and is odd.
We put $e_1,e_s$ in $D'$.
If $s=3$, then $e_2\in A$ and we put $e_2$ in $A'$. Clearly, $e_2=v_1v_2$ and $v_1,v_2\in N_G(Z)$. 
Suppose that $s\geq 5$.
We choose the first edge $a_j=xy\in M$ that is not included in $D'$ so far and then put $v_1x,yv_{s-1}$ in $A'$ and $a_j$ in $D'$.  Since $v_1,v_{s-1}\in N_G(Z)$ and $x,y\notin N_G^2[Z]$,
 $v_1x,yv_{s-1}\notin E(G)$. 

\medskip
\noindent
{\bf Case 2.}  $e_1\in D,e_2\in A$. In this case $v_1\in N_G(Z)\cap V(G_i)$. Also  $s\geq 2$ and is even. 
If $s=2$, then we put $e_1$ in $D'$ and $e_2$ in $A$. Let $s\geq 4$.
We choose the first edge $a_j=xy\in M$ that is not included in $D'$ so far and then put 
$e_1,a_j$ in $D'$ and $v_1x,yv_s$ in $A'$.  Since $v_1\in N_G(Z)$, $v_s\notin V(G_i)$ and $x,y\notin N_G^2[Z]$,
 $v_1x,yv_{s-1}\notin E(G)$.  

\medskip
The case $e_1\in A,e_2\in D$ is symmetric to Case 2 and is treated in the same way. It remains to consider the last case.

\medskip
\noindent
{\bf Case 3.}  $e_1,e_2\in A$.  Because $S$ is a $(D,A)$-alternating path, $s\geq 3$ and is odd.
We again choose the first edge $a_j=xy\in M$ that is not included in $D'$ so far and then put 
$a_j$ in $D'$ and $v_0x,yv_s$ in $A'$.  Since $v_0,v_s\notin V(G_i)$ and $x,y\notin N_G^2[Z]$,
 $v_1x,yv_{s-1}\notin E(G)$.  

\medskip
Observe that because $a_1,\ldots,a_s$ are pairwise non-adjacent, the edges that are included in $A'$ are distinct. Notice also that $\mathcal{S}$ has at most $k/3$ trails that have edges in $D\cap E(G_i)$. Because only for such trails $S$, we include edges of $M$ in $D\rq{}$, and for each trail, at most one edge is included, we always have edges in $M$ to include in $D\rq{}$ for the described construction.

By the construction, the sets $D',A'$ satisfy the conditions i)--iii) of the lemma. Hence, it remains to show that $(D',A')$ is a solution of the considered instance of  
\textsc{Edge Editing to a Connected Graph of Given Degrees}. By the construction, for any vertex $v\in V(G)\setminus V(G_i)$, the number of edges of $D$ incident to $v$ is the same as the number of edges of $D'$, and the number of edges of $A$ incident to $v$ is the same as the number of edges of $A'$ incident to $v$.  
Recall that $d_G(v)=\delta(v)$ for $v\in V(G_i)$. By each application of Rules 1, 2 and 3, if for a vertex $v\in V(G_i)$, we put an incident edge in $D'$, then we add an edge incident to $v$ in $A'$, and, symmetrically, if we add an edge incident to $v$ in $A'$, then we put one incident edge in $D'$. Therefore, we do not change degrees of the vertices of $G_i$ by editing. 
It follows, that if $G''=G-D\rq{}+A\rq{}$,
then for any $v\in V(G)$,
$d_{G''}(d)=d_{G'}(v)=\delta(v)$.  Finally, because the deletion of  $a_1,\ldots,a_s$ does not destroy the connectivity of $G_i$, $G''$ is connected if $G'$ is connected. 
It means that $(D',A')$ is a solution. 
\end{proof}

Using this lemma, we obtain our next rule. Let $i\in\{1,\ldots,p\}$.

\begin{figure}[ht]
\centering\scalebox{0.75}{\input{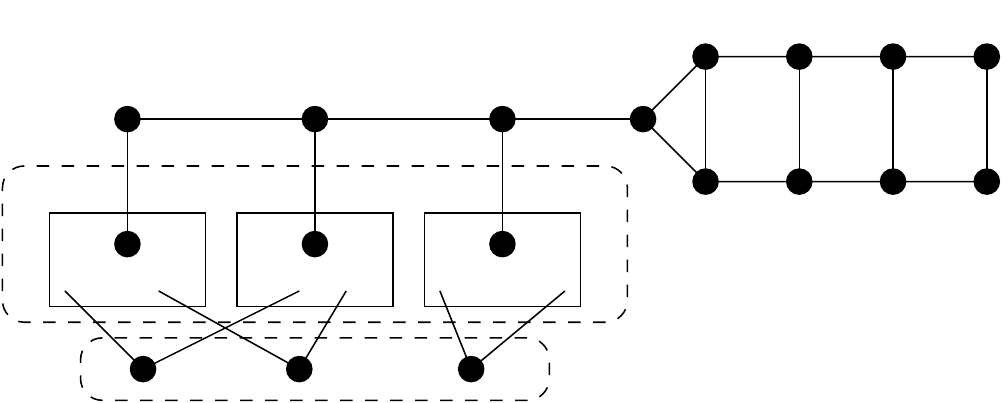_t}}
\caption{Modification of $G_i$ by Rule 2.
\label{fig:rule-two}}
\end{figure}

\medskip
\noindent
{\bf Rule 2.}  Consider the component $G_i$ of $G-Z$.
Let $F_1,\ldots,F_{\ell}$ be the components of $G[N_G^2[Z]\cap V(G_i)]$. Notice that it can happen that $N_G^2[Z]\cap V(G_i)=\emptyset$, and it is assumed that $\ell=0$ in this case.
If $|V(G_i)|-|N_G^2[Z]\cap V(G_i)|>\ell+2k+1$, then do the following.
\begin{itemize}
\item[i)] Construct spanning trees of $F_1,\ldots,F_{\ell}$ and then construct a spanning tree $T$ of $G_i$ that contains the constructed spanning trees of $F_1,\ldots,F_{\ell}$ as subgraphs. 
\item[ii)] Let $R$ be the set of edges of $E(G_i)\setminus E(T)$ that are not incident to the vertices of $N_G^2[Z]$. Find a maximum matching $M$ in $G[R]$.
\item[iii)] If $|M|\geq k/3$, then modify $G$ and the function $\delta$ as follows (see Fig.~\ref{fig:rule-two}):
\begin{itemize}
\item delete the vertices of $V(G_i)\setminus N_G^2[Z]$;
\item construct vertices $v_0,\ldots,v_{\ell}$, $x_1,\ldots,x_s$ and $y_1,\ldots,y_s$ for $s=\lfloor k/3 \rfloor$;
\item for $j\in\{1,\ldots\ell\}$, choose a vertex $u_j$ in $F_j$ adjacent to some vertex in $G_i-V(F_i)$;
 \item construct edges $u_1v_1,\ldots,u_\ell v_\ell$,  $v_0v_1,\ldots,v_{\ell-1}v_\ell$,  $v_0x_1,v_0y_1$, $x_1x_2,\ldots,\linebreak x_{s-1}x_s$,  $y_1y_2,\ldots,y_{s-1}y_s$ and $x_1y_1,\ldots,x_sy_s$,
\item set $\delta(v_\ell)=\delta(x_s)=\delta(y_s)=2$,
$\delta(v_0)=\ldots=\delta(v_{\ell-1})=\delta(x_1)=\ldots=\delta(x_{s-1})=\delta(y_1)=\ldots=\delta(y_{s-1})=3$, $\delta(v)=d_G(v)$ (in the modified graph $G$) for $V(G_i)\cap N_G^2[Z]$, and $\delta$ has the same values as before for all other vertices of $G$.
\end{itemize}
\end{itemize}

We show that Rule 2 is safe.

\begin{lemma}\label{lem:rule-two}
Let $G'$ be the graph obtained by the application of Rule 2 from $G$ for $G_i$, and denote by $\delta'$ the modified function $\delta$.  Then
$(G',\delta',d,k)$ is a feasible instance of 
\textsc{Edge Editing to a Connected Graph of Given Degrees}, and 
$(G,\delta,d,k)$ is a YES-instance of \textsc{Edge Editing to a Connected Graph of Given Degrees} if and only if $(G',\delta',d,k)$ be a YES-instance.
\end{lemma}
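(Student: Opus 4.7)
The plan is to prove Lemma~\ref{lem:rule-two} in three parts. First I would verify that $(G',\delta',d,k)$ is a feasible instance: the fresh gadget vertices $v_j,x_j,y_j$ receive $\delta'$-values in $\{2,3\}\subseteq\{0,\ldots,d\}$ because we have assumed $d\geq 3$; every $v\in V(G_i)\cap N_G^2[Z]$ has $\delta'(v)=d_{G'}(v)\leq d_G(v)\leq d$; and all remaining vertices retain an already valid $\delta$-value. Moreover, none of the prescribed gadget edges can coincide with a preexisting edge of $G$, since the $v_j,x_j,y_j$ are new vertices.

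For the forward direction I would start from a solution $(D,A)$ of $(G,\delta,d,k)$ and invoke Lemma~\ref{lem:sel-edges} with the matching $M$ that Rule~2 has just produced (after restricting, if necessary, to any $\lfloor k/3\rfloor$ of its edges). That matching meets the hypothesis of the lemma: it sits inside $E(G_i)\setminus E(T)$ so that $G_i-M$ is connected, and all endpoints lie outside $N_G^2[Z]$. Inspecting the proof of Lemma~\ref{lem:sel-edges}, the resulting solution $(D,A)$ has the following structure on $G_i$: its deletions inside $E(G_i)$ are contained in $M$, and each such deletion $a_t=xy$ is accompanied only by additions of the paired form $v_ax$ and $yv_b$ with $v_a,v_b\in N_G(Z)\cap V(G_i)$, while the remaining additions inside $V(G_i)$ either stay in $N_G(Z)\cap V(G_i)$ or leave $V(G_i)$. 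I would then transport $(D,A)$ to a solution $(D',A')$ of $(G',\delta',d,k)$ by keeping every edit that avoids $V(G_i)\setminus N_G^2[Z]$ and by replacing each pattern $\{a_t,v_ax,yv_b\}$ in $G_i$ by the pattern $\{x_ty_t,v_ax_t,y_tv_b\}$ inside the gadget, using the rungs in the order they are consumed. The $v$-path of the gadget takes over the connectivity role that the deep part of $G_i$ used to play between $F_1,\ldots,F_\ell$, so $G'-D'+A'$ is connected, has the correct degree at every vertex, and uses at most $|D|+|A|\leq k$ edits.

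For the backward direction I would exploit the symmetry of the construction: the rungs $\{x_1y_1,\ldots,x_sy_s\}$ themselves form a matching of size $s=\lfloor k/3\rfloor$ in the component of $G'-Z$ containing the gadget, they lie at distance at least $3$ from $Z$ in $G'$ (any $x_t$ can only reach $Z$ via $v_0$, the $v$-path, some $u_j$, and the interior of $F_j$), and the gadget with all rungs removed is still connected. Hence Lemma~\ref{lem:sel-edges} applies verbatim to $(G',\delta',d,k)$, producing a solution whose gadget-interior edits consist solely of rung deletions together with their paired added edges from $N_G(Z)\cap V(G_i)$ to rung endpoints. I can then invert the translation used in the forward direction, replacing each gadget edit pattern by a corresponding pattern on edges of $M$ in $G_i$; the spanning tree $T$ constructed in Rule~2, which by design contains spanning trees of every $F_j$, guarantees that the modified $G-D+A$ remains connected.

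The main technical obstacle is to make both translations fully rigorous. One must check that distinct alternating subtrails consume distinct matching edges of $M$ and distinct rungs of the ladder, so that no edge is ever simultaneously inserted into $D$ and $A$; that the three subcase patterns used in the proof of Lemma~\ref{lem:sel-edges} (subtrails with $e_1,e_s\in D$; with $e_1\in D,e_s\in A$; and with $e_1,e_s\in A$) can each be mirrored inside the gadget without colliding with the existing gadget edges $x_{t-1}x_t$, $y_{t-1}y_t$, or $x_ty_t$; and that the total number of edits is exactly preserved rather than increased. This is ultimately a careful bookkeeping argument rather than a conceptual difficulty.
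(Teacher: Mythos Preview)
Your proposal follows essentially the same approach as the paper: normalize the solution via Lemma~\ref{lem:sel-edges}, then translate each deleted matching edge and its two paired added edges to the corresponding rung $x_ty_t$ of the gadget together with its two paired added edges, and run the symmetric argument backward using the rungs themselves as the matching in $G'$. One small inaccuracy: you assert that the partners $v_a,v_b$ of a deleted matching edge $a_t=xy$ always lie in $N_G(Z)\cap V(G_i)$, but in Cases~2 and~3 of the proof of Lemma~\ref{lem:sel-edges} one or both of them lie outside $V(G_i)$; the paper in fact uses (and your translation only needs) that each endpoint of $a_t$ has a unique incident edge in $A$ whose other endpoint is either in $N_G(Z)\cap V(G_i)$ or outside $V(G_i)$, which is enough to guarantee that $v_ax_t$ and $y_tv_b$ are non-edges of $G'$.
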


\begin{proof}
Clearly, we can assume that $G$ was modified by the rule, i.e., $G_i$ was replaced by the gadget shown in Fig.~\ref{fig:rule-two}. 
Denote by $G_i'$ the component of $G'-Z$ that is obtained from $G_i$.
We have a matching $M$ of size at least $k/3$ such that $G_i-M$ is connected and edges of $M$ are not incident to vertices of $N_G^2[X]$. Let $a_1,\ldots,a_s\in M$ be $s$ arbitrary edges of $M$. 

Observe that since $d\geq 3$ and $d_{G'}(u_h)\leq d_G(u_h)$ for $h\in\{1,\ldots,\ell\}$,  $\delta'(v)\leq d$, i.e.,  $(G',\delta',d,k)$ is a feasible instance of 
\textsc{Edge Editing to a Connected Graph of Given Degrees}.

Suppose that $(G,\delta,d,k)$ is a YES-instance of \textsc{Edge Editing to a Connected Graph of Given Degrees}. Then by Lemma~\ref{lem:sel-edges},
the problem has a solution $(D,A)$ such that  $D\cap E(G_i)\subseteq \{a_1,\ldots,a_s\}$ and for any  $uv\in A$ such that $u\in V(G_i)$, either $u,v\in N_G(Z)$ or $v\notin V(G_i)$. 
Notice that each end-vertex of an edge $a_j\in D$ has the unique incident edge in $A$. Notice also that these edges of $A$ have other end-vertices outside $G_i$.
We construct the solution $(D',A')$ for $(G',\delta',d,k)$ as follows. We obtain $D'$ by replacing each edge $a_j\in D$ by the edge $x_jy_j$. 
To get $A'$, for each $a_j=f_jg_j\in D$, we replace the unique edges $hf_j,h'g_j\in A$ incident with $f_j,g_j$ by $hx_j,h'y_j$. 
As $G_i-\{x_1y_1,\ldots,x_sy_s\}$ is connected,
$N_G(Z)=N_{G'}(Z)$ and 
any two vertices $u,v\in N_G(Z)\cap V(G_i)$ are not adjacent in $G$ if and only if they are not adjacent in $G'$,  
it is straightforward to check that $(D',A')$ is a solution for $(G',\delta',d,k)$.

Assume now that  
$(G',\delta',d,k)$ is a YES-instance. 
We use the same arguments as before to construct a solution for $(G,\delta,d,k)$. Recall that 
the deletion of $x_1y_1,\ldots,x_sy_s$ does not destroy the connectivity of $G_i'$. Hence, we can apply 
Lemma~\ref{lem:sel-edges} and assume that 
$(G',\delta',d,k)$ has a solution $(D',A')$ such that  $D'\cap E(G_i')\subseteq \{x_1y_1,\ldots,x_sy_s\}$ and for any  $uv\in A'$ such that $u\in V(G_i')$, either $u,v\in N_{G'}(Z)$ or $v\notin V(G_i')$.
We construct the solution $(D,A)$ for $(G,\delta,d,k)$ as follows. We obtain $D$ by replacing each edge $x_jy_j\in D'$ by $a_j$. 
To get $A$, for each $x_jy_j\in D$, we replace the unique edges $hx_j,h'y_j\in A'$ incident with $x_j,y_j$ by $hf_j,h'g_j$ where $f_jg_j=a_j$. 
\end{proof}

We apply Rule 2 for all $i\in \{1,\ldots,p\}$. To simplify notations, assume that $(G,\delta,d,k)$ is the obtained instance of \textsc{Edge Editing to a Connected Graph of Given Degrees} and $G_1,\ldots,G_p$ are the components of $G-Z$. 
The next rule is applied to components of $G$ that are trees without vertices adjacent to $Z$. Let $i\in\{1,\ldots,k\}$.

\medskip
\noindent
{\bf Rule 3.} If $G_i$ is a tree with at least $kd/2+1$ vertices and $N_G(Z)\cap V(G_i)=\emptyset$, then replace $G_i$ by a path $P=u_1,\ldots,u_{k}$
on $k$ vertices and set $\delta(u_1)=\delta(u_{k})=1$ and $\delta(u_2)=\ldots=\delta(u_{k-1})=2$.

\begin{lemma}\label{lem:rule-three}
Let $G'$ be the graph obtained by the application of Rule 3 from $G$ for $G_i$, and denote by $\delta'$ the modified function $\delta$.  Then
$(G',\delta',d,k)$ is a feasible instance of 
\textsc{Edge Editing to a Connected Graph of Given Degrees}, and 
$(G,\delta,d,k)$ is a YES-instance of \textsc{Edge Editing to a Connected Graph of Given Degrees} if and only if $(G',\delta',d,k)$ is a YES-instance.
\end{lemma}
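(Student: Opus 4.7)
The plan is as follows. First I verify feasibility: on $V(P)$ we have $\delta'(v) \in \{1,2\} \le 2 \le d$ since $d \ge 3$, and $\delta'$ agrees with $\delta$ on $V(G) \setminus V(G_i)$, so $(G',\delta',d,k)$ is a feasible instance. The crucial structural observation is that the hypotheses of Rule 3, namely $G_i$ being a component of $G-Z$ and $N_G(Z)\cap V(G_i)=\emptyset$, together imply that $G_i$ is in fact a connected component of $G$ itself, and that every $v \in V(G_i)$ satisfies $d_G(v)=\delta(v)$. Symmetrically, the replacement path $P$ is an isolated component of $G'$ with $d_{G'}(p)=\delta'(p)$ for each $p \in V(P)$. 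Hence, for either instance, the operations of any solution split cleanly into operations inside the tree component (call it $T \in \{G_i, P\}$) and operations outside, and the operations outside $V(T)$ are subject to identical constraints in the two instances.

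Given a solution $(D,A)$ interacting with $T$, let $D^* = D \cap E(T)$, let $A^*_1 = A \cap \binom{V(T)}{2}$, and let $A^*_2$ be the edges of $A$ with exactly one endpoint in $V(T)$. Summing the per-vertex degree-balance constraint over $V(T)$ yields $2|D^*| = 2|A^*_1| + |A^*_2|$, so $|A^*_2|$ is even; write $|A^*_2|=2m$. For the forward direction, take a solution of $(G,\delta,d,k)$ and apply Lemma~\ref{lem:alt} to decompose $D\cup A$ into $(D,A)$-alternating trails; since $V(G_i)$ is disjoint from $Z$ and $N_G(Z)$, each trail enters $V(G_i)$ only through $A^*_2$-edges or is closed inside $V(G_i)$. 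Using the tree structure of $G_i$, I would argue that closed $V(G_i)$-internal trails and any edges of $A^*_1$ can be locally rearranged away, reducing the solution to a canonical form with $A^*_1 = \emptyset$ and $|D^*|=m$. I then construct the $G'$-solution by retaining all operations outside $V(G_i)$ and installing on $P$ the deletions $\{p_{2j-1}p_{2j} : 1 \le j \le m\}$ together with $2m$ external edges from $p_1, p_2, \ldots, p_{2m}$ to the same outside vertices used by $A^*_2$; the pairing is chosen so that the bipartite macro-graph (outside-components versus components of $P - D'^*$ linked by external edges) is connected, mirroring the connected macro-graph of the $G$-solution. Degree balance on $V(P)$ is then immediate from the construction.

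The backward direction is symmetric: given a canonical solution on $G'$, I translate the $m$ deletions in $P$ and $2m$ external edges to operations on $G_i$ by selecting $m$ suitable edges of $G_i$ to delete and routing the external endpoints correspondingly; the size hypothesis $|V(G_i)| \ge kd/2+1$ together with $\Delta(G_i)\le d$ ensures $G_i$ has enough edges (at least $kd/2$) and enough vertices of suitable degree to accommodate the translation for any $m \le k/2$. The main technical obstacle I anticipate is the combined canonicalization and macro-graph matching: on $P$, contracting components of $P - D'^*$ along $D'^*$ always yields a super-path on $m+1$ vertices, forcing the per-super-vertex external-edge degree sequence to be $(1,2,\ldots,2,1)$, whereas on $G_i$ the analogous super-tree can be an arbitrary tree on $m+1$ vertices. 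The trail-rerouting argument must therefore ensure that the $G$-solution can be massaged (and symmetrically that the $G'$-solution's external-edge assignment can be chosen) so that macro-graph connectivity survives this structural discrepancy, and this is where the tree structure of $G_i$ and the length-$k$ slack of $P$ enter decisively.
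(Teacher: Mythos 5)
Your feasibility check and the degree-balance identity $2|D^*| = 2|A_1^*| + |A_2^*|$ are fine, and your observation that $G_i$ is an entire component of $G$ with all degrees already correct matches the setting of the paper's argument. The genuine gap is in the connectivity step, and it is exactly the obstacle you flag in your last paragraph without resolving: after translating the deletions from $G_i$ to the canonical deletions $u_1u_2, u_3u_4,\ldots$ on $P$, the components of $P$ minus those edges form a path-like super-structure in which every component carries at most two external attachment points, whereas the components of $G_i - D^*$ form an arbitrary tree-like super-structure whose components can carry many attachment points each. There is in general no re-routing of the $2m$ external edges that reproduces ``the connected macro-graph of the $G$-solution,'' because the two incidence patterns are not isomorphic; saying the pairing ``is chosen so that the macro-graph is connected'' asserts the conclusion rather than proving it. Your proposed canonicalization ($A_1^*=\emptyset$ via trail rerouting) has the same difficulty: discarding an internal added edge or a closed internal trail preserves degrees but can itself disconnect the edited graph, so it cannot be done ``locally'' if connectivity must be maintained throughout.

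The paper avoids this entirely with Lemma~\ref{lem:rearrange-comp}, which is the missing ingredient: one only needs to produce a pair $(D',A')$ with correct degrees, $|D'|+|A'|\le k$, every component of $G'-D'$ having positive deficit, and $|A'|\ge r-1$ where $r$ is the number of components of $G'-D'$; connectivity is then obtained for free by the exchange $\{u_1v_1,u_2v_2\}\to\{u_1u_2,v_1v_2\}$, which strictly decreases the number of components of the edited graph. With that lemma, the forward direction becomes pure counting: keep the outside operations, delete $t=s-|A_1^*|$ canonical edges of $P$, and attach the external added edges to the resulting unit-deficit terminals greedily, in any order. Unless you prove a statement of this rearrangement type (or give a concrete matching argument between the two attachment patterns, which I do not see how to do), the proof is incomplete. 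The backward direction's greedy selection of $s\le k/2$ pairwise non-adjacent edges in $G_i$ using $|V(G_i)|\ge kd/2+1$ and $\Delta(G_i)\le d$ is fine and agrees with the paper.
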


\begin{proof}
Obviously, we can assume that $G$ was modified by the rule, i.e., $G_i$ was replaced by a path  $P=u_1,\ldots,u_{k}$.
Since $\delta'(u_j)\leq 2$ for $j\in\{1,\ldots,k\}$, we immediately conclude that $(G',\delta',d,k)$ is a feasible instance of 
\textsc{Edge Editing to a Connected Graph of Given Degrees}.

Suppose that $(G,\delta,d,k)$ is a YES-instance of \textsc{Edge Editing to a Connected Graph of Given Degrees}, and let $(D,A)$ be a solution. Let also $\{a_1,\ldots,a_s\}=D\cap E(G_i)$. 
Notice that $s\leq k/2$ and  $\{a_1,\ldots,a_s\}\neq\emptyset$ because $G-D+A$ is a connected graph. 
Let $A_1,A_2,A_3$ be the partition of $A$ ($A_1,A_2$ can be empty) such that for any $uv\in A_1$, $u,v\in V(G_i)$, for each $uv\in A_2$, $u,v\in V(G)\setminus V(G_i)$, and the edges of $A_3$ join vertices in $V(G_i)$ with vertices in $V(G)\setminus V(G_i)$. 
Consider $F=G-D+A_1+A_2$. Notice that for any vertex $v\in V(F)$, $d_{F}(v)\leq \delta(v)$.
Denote by
$F_1$ the subgraph of $F$ induced by $V(G)\setminus V(G_i)$ and let $F_2$ the subgraph of $G-D+A_2$ induced by $V(G_i)$. By Observation~\ref{obs:comp}, $G-D$ has at most $|A_1|+|A_2|+|A_3|+1$ components. Hence, $F$ has at most $|A_3|+1$ components.
Let $t=s-|A_1|$. Because $G_i$ is a tree, $F_2$ has at least $t+1$ components.
Notice that  $\df(F_2)=2s-2|A_1|=2t$.  
Observe also that each component of $F$ has a positive deficit.

Consider the edges $a_i'=u_{2i-1}u_{2i}$ of $P$ for $i\in\{1,\ldots,t\}$. Let $F_2'$ be the graph obtained from $P$ by the deletion of the edges $a_1',\ldots,a_t'$. Notice that $F_2'$ has $t+1$ components,
$\df(F_2')=2t$, each component of $F_2'$ has a positive deficit, 
the terminals of $F_2'$, i.e., the vertices with positive deficits, are pairwise distinct and each terminal has the deficit one.

 We construct the pair $(D',A')$ where $D'\subseteq E(G')$ and $A'\subseteq \binom{V(G')}{2}\setminus E(G')$
for $(G',\delta',d,k)$ as follows. We set $D'=(D\setminus\{a_1,\ldots,a_s\})\cup\{a_1',\ldots,a_t'\}$.
Initially we include in $A'$ the edges of $A_2$.
For each terminal $u$ of $F_1$, we add edges that join this terminal with distinct terminals of $F_2'$ in the greedy way to satisfy its deficit.  Because $\df(F_2')=\df(F_2)$, we always can construct $A'$ in the described way. Moreover, $|A'|=|A|-|A_1|$ and
the number of components of $G-D$ is at least the number of components of $G'-D'$ minus $|A_1|$. Let $G''=G'-D'+A'$. For any vertex $v$ of $G''$, $d_{G''}(v)=\delta(v)$. 
By  Lemma~\ref{lem:rearrange-comp}, the instance   $(G',\delta',d,k)$  has a solution.

Suppose now that $(G',\delta',d,k)$ is a YES-instance of \textsc{Edge Editing to a Connected Graph of Given Degrees}, and let $(D',A')$ be a solution. We show that $(G,\delta,d,k)$ has a solution using symmetric arguments. 

Let $\{a_1',\ldots,a_s'\}=D\cap E(P)$. Clearly,  $s\leq k/2$ and it can be assumed that $\{a_1',\ldots,a_s'\}\neq\emptyset$. 
Observe that because $P$ is a path, we can assume without loss of generality that there is no edges in $A$ with the both end-vertices in $P$.
Because we apply the same arguments as above, it is sufficient to explain how we find the edges of $G_i$ that replace $a_1',\ldots,a_s'$ in $D'$.

The tree $G_i$ has at least $kd/2$ edges. We select $s$ pairwise non-adjacent edges $a_1,\ldots,a_s$ in $G_i$ in the greedy way: we select an edge incident with a leaf and then delete the edge, the incident vertices, and the adjacent edges. As the maximum degree of $G_i$ is at most $d$ and $s\leq k/2$, we always find $a_1,\ldots,a_s$.  
Let $F_2'$ be the graph obtained from $G_i$ by the deletion of the edges $a_1,\ldots,a_s$. Notice that $F_2'$ has $s+1$ components,
$\df(F_2')=2s$, each component of $F_2'$ has a positive deficit, 
the terminals of $F_2'$, i.e., the vertices with positive deficits, are pairwise distinct and each terminal has the deficit one.
\end{proof}

The next rule is applied to components of $G$ that are unicyclic graphs without vertices adjacent to $Z$. Let $i\in\{1,\ldots,k\}$.

\medskip
\noindent
{\bf Rule 4.} If $G_i$ is a unicyclic graph with at least $kd/2$ vertices and $N_G(Z)\cap V(G_i)=\emptyset$, then replace $G_i$ by a cycle $C=u_0,\ldots,u_{k}$
on $k$ vertices, $u_0=u_k$, and set $\delta(u_1)=\ldots=\delta(u_{k})=2$.

\begin{lemma}\label{lem:rule-four}
Let $G'$ be the graph obtained by the application of Rule 4 from $G$ for $G_i$, and denote by $\delta'$ the modified function $\delta$.  Then
$(G',\delta',d,k)$ is a feasible instance of 
\textsc{Edge Editing to a Connected Graph of Given Degrees}, and 
$(G,\delta,d,k)$ is a YES-instance of \textsc{Edge Editing to a Connected Graph of Given Degrees} if and only if $(G',\delta',d,k)$ is a YES-instance.
\end{lemma}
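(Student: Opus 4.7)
My plan is to adapt the proof of Lemma~\ref{lem:rule-three} to the unicyclic setting; the overall structure of the argument is essentially unchanged, with additional bookkeeping to account for the single cycle present in $G_i$ and in the replacement gadget $C$. Feasibility of $(G',\delta',d,k)$ is immediate, since every vertex of $C$ has $\delta'=2\le d$ (we are assuming $d\ge 3$) while all other vertices retain their original $\delta$-values.

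For the ``only if'' direction, given a solution $(D,A)$ for $(G,\delta,d,k)$, I partition $A=A_1\cup A_2\cup A_3$ according to whether edges have both endpoints in $V(G_i)$, both endpoints outside $V(G_i)$, or exactly one endpoint in each. Set $\{a_1,\ldots,a_s\}=D\cap E(G_i)$ and $t=s-|A_1|$. The graph $F_2=(G_i-\{a_1,\ldots,a_s\})+A_1$ has $\df(F_2)=2t$, and since $G_i$ is unicyclic, $F_2$ has either $t$ or $t+1$ components, depending on whether one of the $a_j$'s lies on the unique cycle (and on how $A_1$ merges the resulting pieces). I then choose $t'\in\{t,t+1\}$ pairwise non-adjacent edges of the replacement cycle $C$ so that deleting them from $C$ reproduces the component count and deficit profile of $F_2$; this is possible because $t'\le s+1\le k/2+1$, and $C$ is a cycle of length $k$. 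Setting $D'=(D\setminus\{a_1,\ldots,a_s\})\cup\{\text{chosen edges of }C\}$, keeping $A_2$ in $A'$, and re-routing the $A_3$-edges from their external endpoints to the new terminals on $C$, I verify that the edit budget is not exceeded and apply Lemma~\ref{lem:rearrange-comp} to conclude. The ``if'' direction is symmetric: given a solution $(D',A')$ and $s'=|D'\cap E(C)|$, I select $s'$ pairwise non-adjacent edges of $G_i$ whose removal matches the component structure of $C-\{a'_1,\ldots,a'_{s'}\}$. The bound $|V(G_i)|\ge kd/2$, the maximum-degree bound $\Delta(G_i)\le d$, and $s'\le k/2$ allow this selection to be carried out greedily, exactly as in the proof of Lemma~\ref{lem:rule-three}, with a cycle edge chosen first if a component-count match requires it.

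The chief subtlety is reconciling component counts across the two sides: a tree minus $s$ non-adjacent edges always yields $s+1$ components, whereas a unicyclic graph yields $s$ or $s+1$, and a cycle of length at least $2s$ yields $s$ components (for $s\ge 1$). This one-unit discrepancy is absorbed by the freedom available on each side---including or excluding a cycle edge among the $a_j$'s in $G_i$, and correspondingly choosing $t$ or $t+1$ edges in $C$. Once the two sides are aligned in this way, the deficit-matching step and the invocation of Lemma~\ref{lem:rearrange-comp} follow verbatim from the proof of Lemma~\ref{lem:rule-three}.
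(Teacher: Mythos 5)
Your overall architecture matches the paper's: feasibility from $\delta'\le 2\le d$, the partition $A=A_1\cup A_2\cup A_3$, replacement of $D\cap E(G_i)$ by matched edges of $C$ (and conversely), greedy rerouting of the cross edges, and a final appeal to Lemma~\ref{lem:rearrange-comp}. The reverse direction is also essentially the paper's: one must pick a cycle edge of $G_i$ first so that deleting $s$ pairwise non-adjacent edges yields exactly $s$ components, matching the $s$ components of $C$ minus $s$ non-adjacent edges.

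However, your resolution of what you call the chief subtlety is flawed. In the forward direction you propose to delete $t'\in\{t,t+1\}$ edges of $C$ so as to ``reproduce the component count and deficit profile of $F_2$.'' These two goals conflict: deleting $t'$ pairwise non-adjacent edges from the cycle $C$ produces a graph with deficit exactly $2t'$, while the deficit that must be exported from the gadget is forced to be $2t=|A_3\cap(\text{edges meeting }G_i)|$ by the degree accounting; hence $t'=t$ is mandatory, and taking $t'=t+1$ would leave two units of deficit on $C$ unsatisfiable, so the resulting edit would not meet $\delta'$. Moreover, the ``freedom'' you invoke --- including or excluding a cycle edge among the $a_j$ --- does not exist in this direction, since the $a_j$ are dictated by the given solution $(D,A)$. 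The correct observation (and the one the paper uses) is that the component count need not be reproduced at all: $G_i-\{a_1,\ldots,a_s\}$ has at least $s$ components (it has $|V(G_i)|$ vertices and $|V(G_i)|-s$ edges), so $C$ minus $t=s-|A_1|$ non-adjacent edges has at most as many components as $F_2$; consequently the number of components of $G'-D'$ is at most that of $G-D$ minus $|A_1|\;(\le |A|+1-|A_1|=|A'|+1)$, every component still has positive deficit, and Lemma~\ref{lem:rearrange-comp} restores connectivity. With that correction your argument coincides with the paper's; as written, the $t'=t+1$ branch is an error rather than a needed case.
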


\begin{proof}
Obviously, we can assume that $G$ was modified by the rule, i.e., $G_i$ was replaced by a cycle  $C=u_0,\ldots,u_{k}$.
Since $\delta'(u_j)\leq 2$ for $j\in\{1,\ldots,k\}$, we immediately conclude that $(G',\delta',d,k)$ is a feasible instance of 
\textsc{Edge Editing to a Connected Graph of Given Degrees}.

Suppose that $(G,\delta,d,k)$ is a YES-instance of \textsc{Edge Editing to a Connected Graph of Given Degrees}, and let $(D,A)$ be a solution. Let also $\{a_1,\ldots,a_s\}=D\cap E(G_i)$. Notice that $s\leq k/2$ and  $\{a_1,\ldots,a_s\}\neq\emptyset$. 
Let $A_1,A_2,A_3$ be the partition of $A$ ($A_1,A_2$ can be empty) such that for any $uv\in A_1$, $u,v\in V(G_i)$, for each $uv\in A_2$, $u,v\in V(G)\setminus V(G_i)$, and the edges of $A_3$ join vertices in $V(G_i)$ with vertices in $V(G)\setminus V(G_i)$. 
Consider $F=G-D+A_1+A_2$. Notice that for any vertex $v\in V(F)$, $d_{F}(v)\leq \delta(v)$.
Denote by
$F_1$ the subgraph of $F$ induced by $V(G)\setminus V(G_i)$ and let $F_2$ the subgraph of $G-D+A_2$ induced by $V(G_i)$. By Observation~\ref{obs:comp}, $G-D$ has at most $|A_1|+|A_2|+|A_3|+1$ components. Hence, $F$ has at most $|A_3|+1$ components.
Let $t=s-|A_1|$. Because $G_i$ is a unicyclic graph, $F_2$ has at least $t$ components.
Notice that  $\df(F_2)=2s-2|A_1|=2t$.  
Observe also that each component of $F$ has a positive deficit.

Consider the edges $a_i'=u_{2i-1}u_{2i}$ of $C$ for $i\in\{1,\ldots,t\}$. Let $F_2'$ be the graph obtained from $C$ by the deletion of the edges $a_1',\ldots,a_t'$. Notice that $F_2'$ has $t$ components,
$\df(F_2')=2t$, each component of $F_2'$ has a positive deficit, 
the terminals of $F_2'$, i.e., the vertices with positive deficits, are pairwise distinct and each terminal has the deficit one.

We construct the pair $(D',A')$ where $D'\subseteq E(G')$ and $A'\subseteq \binom{V(G')}{2}\setminus E(G')$
for $(G',\delta',d,k)$ as follows. We set $D'=(D\setminus\{a_1,\ldots,a_s\})\cup\{a_1',\ldots,a_t'\}$.
Initially we include in $A'$ the edges of $A_2$.
For each terminal $u$ of $F_1$, we add edges that join this terminal with distinct terminals of $F_2'$ in the greedy way to satisfy its deficit.  Because $\df(F_2')=\df(F_2)$, we always can construct $A'$ in the described way. Moreover, $|A'|=|A|-|A_1|$ and
the number of components of $G-D$ is at least the number of components of $G'-D'$ minus $|A_1|$. Let $G''=G'-D'+A'$. For any vertex $v$ of $G''$, $d_{G''}(v)=\delta(v)$. 
By  Lemma~\ref{lem:rearrange-comp}, the instance   $(G',\delta',d,k)$  has a solution.

Suppose now that $(G',\delta',d,k)$ is a YES-instance of \textsc{Edge Editing to a Connected Graph of Given Degrees}, and let $(D',A')$ be a solution. We show that $(G,\delta,d,k)$ has a solution using symmetric arguments. 

Let $\{a_1',\ldots,a_s'\}=D\cap E(C)$. Clearly,  $s\leq k/2$ and it can be assumed that $\{a_1',\ldots,a_s'\}\neq\emptyset$. 
Observe that because $C$ is a cycle, we can assume without loss of generality that there is no edges in $A$ with the both end-vertices in $C$.
Because we apply the same arguments as above, it is sufficient to explain how we find the edges of $G_i$ that replace $a_1',\ldots,a_s'$ in $D'$.

The graph $G_i$ has at least $kd/2$ edges. We select $s$ pairwise non-adjacent edges $a_1,\ldots,a_s$ in $G_i$ in the greedy way. First, we select an edge $a_1$ in the unique cycle of $G_i$ and delete it together with incident vertices and adjacent edges. Observe that we delete at most $2d-1$ edges. Then  
 we recursively select an edge in the obtained incident with a leaf and then delete the edge, the incident vertices, and the adjacent edges. As the maximum degree of $G_i$ is at most $d$ and $s\leq k/2$, we always find $a_1,\ldots,a_s$.   
Let $F_2'$ be the graph obtained from $G_i$ by the deletion of the edges $a_1,\ldots,a_s$. Notice that $F_2'$ has $s$ components,
$\df(F_2')=2s$, each component of $F_2'$ has a positive deficit, 
the terminals of $F_2'$, i.e., the vertices with positive deficits, are pairwise distinct and each terminal has the deficit one.
\end{proof}

The Rules 3 and 4 are applied for all $i\in\{1,\ldots,p\}$. We again assume that $(G,\delta,d,k)$ is the obtained instance of \textsc{Edge Editing to a Connected Graph of Given Degrees} and $G_1,\ldots,G_p$ are the components of $G-Z$.

We construct the set of \emph{branch} vertices $B=B_{1}\cup B_2$. Let $\hat{G}$ be the graph obtained from $G$ by the recursive deletion of vertices $V(G)\setminus N_G[Z]$ of degree one or zero. A vertex $v\in V(\hat{G})$ is included in $B_1$ if  $d_{\hat{G}}(v)\geq 3$ or $v\in N_{\hat{G}}[Z]$,
and $v$ is included in $B_2$ if $v\notin B_1$, $d_{\hat{G}}(v)=2$ and there are $x,y\in B_1$ (possibly $x=y$) such that $v$ is in a $(x,y)$-path of length at most 6.

We apply the following rules 5 and 6 for each $v\in B$.

\begin{figure}[ht]
\centering\scalebox{0.7}{\input{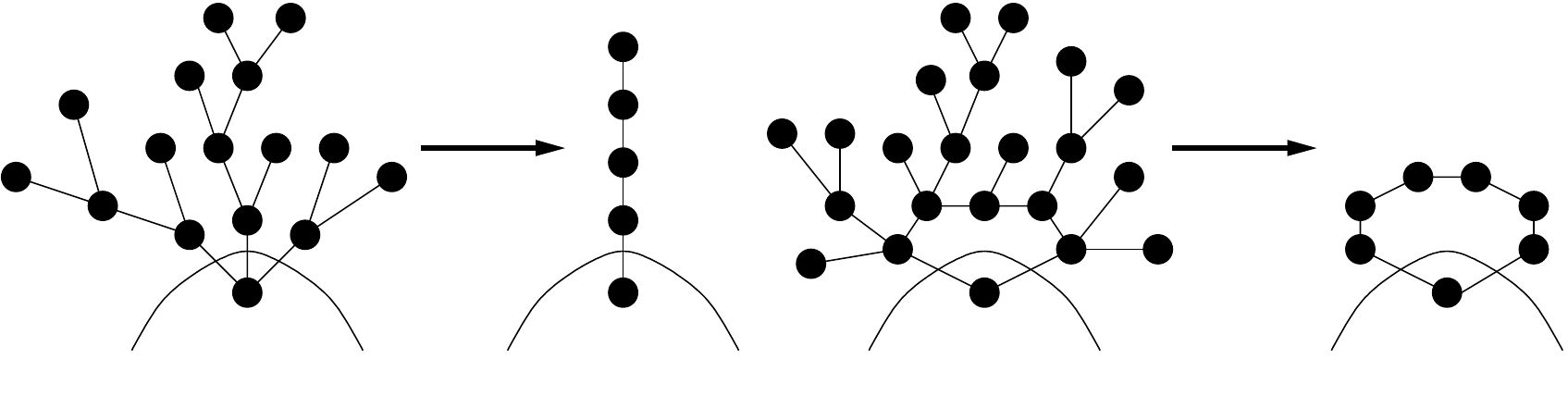_t}}
\caption{Modification of $G$ by Rule 5 and Rule 6.
\label{fig:rule-five-six}}
\end{figure}

\medskip
\noindent
{\bf Rule 5.} If $v\in B$ is a cut vertex of $G$, then find all components of $T_1,\ldots,T_\ell$ of $G-v$ such that for $i\in\{1,\ldots,\ell\}$, i) $T_i$ is a tree, ii) $V(T_i)\subseteq V(G)\setminus B$, and iii) $T_i$ has the unique vertex $v_i$ adjacent to $v$. Let $T=G[V(T_1)\cup\ldots\cup V(T_\ell)\cup\{v\}]$ (see Fig.~\ref{fig:rule-five-six} (a). If the tree $T$ has at least $kd/2+d^2$ vertices, then 
replace $T_1,\ldots,T_\ell$ by a path $P=u_0,u_1,\ldots,u_{k}$, join $v$ and $u_0$ by an edge, and set $\delta(u_0)=\delta(u_{k-1})=2$, $\delta(u_k)=1$, and $\delta(v)=d_G(v)-\ell+1$.

\begin{lemma}\label{lem:rule-five}
Let $G'$ be the graph obtained by the application of Rule 5 from $G$ for $v\in B$, and denote by $\delta'$ the modified function $\delta$.  Then
$(G',\delta',d,k)$ is a feasible instance of 
\textsc{Edge Editing to a Connected Graph of Given Degrees}, and 
$(G,\delta,d,k)$ is a YES-instance of \textsc{Edge Editing to a Connected Graph of Given Degrees} if and only if $(G',\delta',d,k)$ is a YES-instance.
\end{lemma}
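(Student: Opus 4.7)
The plan is to follow the template of Lemma~\ref{lem:rule-three} (Rule~3) and treat the tree $T$ like the tree component there, with the one new twist that the cut vertex $v$ is shared with the rest of the graph and must retain its outside neighbours intact. First, feasibility: since $d \geq 3$, every new path vertex has $\delta' \in \{1,2\}$, and $\delta'(v) = d_G(v) - \ell + 1$ was chosen so that $v$ has zero deficit in $G'$; together these give $\delta'(w) \leq d$ for every vertex of $G'$ (using that $d_G(v) \leq d$ after the earlier rules have bounded degrees).

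For the forward direction, take a solution $(D,A)$ for $(G,\delta,d,k)$ and set $\{a_1, \ldots, a_s\} = D \cap E(T)$, so $s \leq k/2$. Decompose $A = A_1 \cup A_2 \cup A_3$ according to whether both endpoints lie inside $V(T) \setminus \{v\}$, both lie in $(V(G) \setminus V(T)) \cup \{v\}$, or the edge crosses. Mimicking Lemma~\ref{lem:rule-three}, set $t = s - |A_1|$; then $G[V(T)] - D + A_1$ has at least $t+1$ components with total deficit $2t$, and this structure is exactly matched by removing the alternate edges $a_i' = u_{2i-1}u_{2i}$ from $P$ (giving $t+1$ components with total deficit $2t$ and distinct unit-deficit terminals). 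Replacing the $a_j$'s in $D$ by the $a_i'$'s and greedily rerouting the crossing additions $A_3$ to terminals of the new path then yields a pair $(D', A')$ to which Lemma~\ref{lem:rearrange-comp} applies, producing a solution for $(G', \delta', d, k)$.

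For the backward direction, the only new ingredient relative to Rule~3 is locating the $s$ edges of $T$ that will replace $\{a_1', \ldots, a_s'\} = D' \cap E(P)$: they must be pairwise non-adjacent and, crucially, \emph{not incident to $v$}, since otherwise the replacement would perturb $v$'s degree from the other side. The plan is to delete $v$ from $T$ (losing at most $\ell \leq d$ edges) and then apply the same greedy argument as in Rule~3 to the resulting forest $T - v$: pick an edge incident to a leaf, remove both its endpoints together with all adjacent edges, losing at most $d$ edges per step. Since $T$ is a tree on at least $kd/2 + d^2$ vertices with maximum degree at most $d$, the forest $T - v$ still has at least $kd/2 + d^2 - 1 - \ell$ edges, so $s \leq k/2$ such edges can always be extracted. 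After that, the argument copies Rule~3 verbatim: one builds the analogous forest $F_2'$ on $T$, matches its terminals to those of the rest of the graph, and invokes Lemma~\ref{lem:rearrange-comp}.

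The main obstacle is precisely this interaction at $v$. Unlike in Rules~3 and~4, $T$ is not a separate component of $G$, so every edit in or out of $T$ that touches $v$ must be reconciled with edits on the other side. Keeping the replacement edges strictly inside $T \setminus \{v\}$ (in both directions of the equivalence) is what makes the translation symmetric, and the threshold $kd/2 + d^2$ in the rule, rather than the $kd/2 + 1$ of Rule~3, is the slack required so that the greedy extraction can comfortably avoid the entire neighbourhood of $v$ in $T$.
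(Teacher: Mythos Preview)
Your plan follows the paper's template, but the forward direction has a genuine gap at the cut vertex $v$. You partition $A$ by placing $v$ on the ``outside'' side, set $t=s-|A_1|$, and replace the $s$ deleted edges inside $T$ by $t$ path edges. This breaks when $D$ contains edges $vv_i$ incident to $v$ \emph{inside} $T$. Say there are $h$ such edges. They are among your $a_1,\ldots,a_s$ and hence dropped from $D'$, but the $h$ edges of $A$ that compensated for them at $v$ survive: those with the other endpoint outside $T$ sit in $A_2\subseteq A'$, and those with the other endpoint in $V(T)\setminus\{v\}$ sit in $A_3$ and get rerouted to $v u_j$. Net effect: $d_{G''}(v)=\delta'(v)+h$. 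Equivalently, the deficit of $V(T)\setminus\{v\}$ in $T-D+A_1$ is $2s-h-2|A_1|=2t-h$, not $2t$; after you lay down $t$ path deletions creating $2t$ unit-deficit terminals, only $|A_3|=2t-h$ crossing edges are available to fill them, leaving $h$ terminals dangling. The paper fixes this by first \emph{setting aside} $h$ edges of $A$ incident to $v$ (split as $h_1$ going into $T-v$ and $h_2$ going outside), doing the $A_1,A_2,A_3$ partition on the remainder, and taking $t=s-h_1-|A_1|$; this is exactly the extra bookkeeping needed beyond Rule~3.

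In the backward direction you also need slightly more than ``not incident to $v$''. If some replacement edge is incident to a $v_i$, then $v_i$ becomes a terminal of the new forest, and a rerouted crossing edge from $v$ (which lives on the outside) to that terminal would be the already-present edge $vv_i$. The paper therefore chooses the $a_j$ not incident to $v_1,\ldots,v_\ell$; this is what the extra $d^2$ slack in the threshold $kd/2+d^2$ actually buys (skipping the full neighbourhood of each $v_i$, not just $v$), together with the preliminary normalisation that the path deletions may be slid to $u_{2i-1}u_{2i}$, away from $vu_0$.
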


\begin{proof}
Clearly, we can assume that $G$ was modified by the rule, i.e., $T_1,\ldots,T_\ell$ were replaced by a path  $P=u_0,\ldots,u_{k}$.
Since $\delta'(u_j)\leq 2$ for $j\in\{1,\ldots,k+1\}$ and $d_{G'}(v)=d_G(v)=\delta(v)$, we have that $(G',\delta',d,k)$ is a feasible instance of 
\textsc{Edge Editing to a Connected Graph of Given Degrees}. Denote by $T$ the tree $G[V(T_1)\cup\ldots\cup V(T_\ell)\cup \{v\}]$.

Suppose that $(G,\delta,d,k)$ is a YES-instance of \textsc{Edge Editing to a Connected Graph of Given Degrees}, and let $(D,A)$ be a solution. 
If $D\cap E(T)=\emptyset$, then  $(D,A)$ is a solution for $(G',\delta',d,k)$. Hence, let 
$D\cap E(T)=\{a_1,\ldots,a_s\}\neq\emptyset$. 
Denote by $h$ the number of edges of $D\cap E(T)$ incident to $v$. 
The vertex $v$ has at least $h$ edges of $A$ incident to $v$. Let $B$ be the set obtained from $A$ by the deletion of $h$ such edges and assume that $h_1$ selected edges join $v$ with  vertices of $T-v$ and the remaining $h_2=h-h_1$ edges join $v$ with vertices of $G-V(G_i)$. 
Let $A_1,A_2,A_3$ be the partition of $B$ ($A_1,A_2$ can be empty) such that for any $xy\in A_1$, $x,y\in V(T)\setminus\{v\}$, for each $xy\in A_2$, $x,y\in (V(G)\setminus V(T))\cup\{v\}$,
 and the edges of $A_3$ join vertices in $V(T)\setminus\{v\}$ with vertices in $(V(G)\setminus V(T))\cup\{v\}$.

Consider $F=G-D+A_1+A_2$. Notice that for any vertex $x\in V(F)$, $d_{F}(x)\leq \delta(x)$.
By Observation~\ref{obs:comp}, $G-D$ has at most $|A_1|+|A_2|+|A_3|+h+1$ components. Hence, $F$ has at most $|A_3|+h+1$ components. 

Denote by $F_1$ the subgraph of $F$ induced by $V(T)$. 
Let $t=s-h_1-|A_1|$. Because $T$ is a tree, $F_1$ has at least $s+1-|A_1|$ components.
Notice that the total deficit of the terminals in $V(T)\setminus\{v\}$ is $2s-h_1-2|A_1|=2t+h_1$. 
Notice that because $(D,A)$ is a solution, the remaining terminals have the same total deficit and it is equal $|A_3|+h_2$.
Observe also that each component of $F$ has a positive deficit.

Consider the edges $a_i'=u_{2i-1}u_{2i}$ of $P$ for $i\in\{1,\ldots,t\}$. Let $F_2$ be the graph obtained from $P$ by the deletion of the edges $a_1',\ldots,a_s'$. Notice that $F_2$ has $t$ components,
$\df(F_2)=2t$, each component of $F_2$ has a positive deficit, 
the terminals of $F_2$, i.e., the vertices with positive deficits, are pairwise distinct and each terminal has the deficit one.

We construct the pair $(D',A')$ where $D'\subseteq E(G')$ and $A'\subseteq \binom{V(G')}{2}\setminus E(G')$
for $(G',\delta',d,k)$ as follows. We set $D'=(D\setminus\{a_1,\ldots,a_s\})\cup\{a_1',\ldots,a_t'\}$.
Initially we include in $A'$ the edges of $A_2$.
For each terminal $u$ of $G-V(G_i)$, we add edges that join this terminal with distinct terminals of $F_2'$ in the greedy way to satisfy its deficit. Notice that we add $|A_3|+h_2$ edges.
By the choice of $t$,  we always can construct $A'$ in the described way. Moreover, $|A'|=|A|-h_1-|A_1|+h_2$ and
the number of components of $G-D$ is at least the number of components of $G'-D'$ minus $h_1+|A_1|$. Let $G''=G'-D'+A'$. For any vertex $v$ of $G''$, $d_{G''}(v)=\delta(v)$. 
By  Lemma~\ref{lem:rearrange-comp}, the instance   $(G',\delta',d,k)$  has a solution.

Suppose now that $(G',\delta',d,k)$ is a YES-instance of \textsc{Edge Editing to a Connected Graph of Given Degrees}, and let $(D',A')$ be a solution. We show that $(G,\delta,d,k)$ has a solution using symmetric arguments. 

Let $\{a_1',\ldots,a_s'\}=D\cap (E(P)\cup\{vu_0\})$. Clearly,  $s\leq k/2$ and it can be assumed that $\{a_1',\ldots,a_s'\}\neq\emptyset$. 
Because $P$ is a path, it can be assumed that $a_i'=u_{2i-1}u_{2i}$ for $i\in\{1,\ldots,s\}$, and it simplifies the arguments.
Observe also that because $P$ is a path, we can assume without loss of generality that there is no edges in $A$ with the both end-vertices in $P$ except, possibly, $u_1u_{2s}$. In the last case $t=s-1$ and otherwise $t=s$.
Because we apply the same arguments as above, it is sufficient to explain how we find the edges of $G_i$ that replace $a_1',\ldots,a_s'$ in $D'$.
The trees $T_1,\ldots,T_\ell$ have at least $kd/2+d^2$ edges. We select $t$ pairwise non-adjacent edges $a_1,\ldots,a_t$ in $T$ that are not incident to $v_1,\ldots,v_\ell$
 in the greedy way: we select an edge incident with a leaf that is not adjacent to $v_1,\ldots,v_\ell$ and then delete the edge, the incident vertices, and the adjacent edges. As the maximum degree of $T$ is at most $d$ and $s\leq k/2$, we always find $a_1,\ldots,a_t$.   
\end{proof}

\medskip
\noindent
{\bf Rule 6.} If $v\in B$ is a cut vertex of $G$ and there is a component $T$ of $G-v$ such that  i) $T$ is a tree, ii) $V(T)\subseteq V(G)\setminus B$,  iii) $T$ has exactly two vertices adjacent to $v$,
and iv)  $|V(T)|\geq (k/2+2)d+1$, then 
replace $T$ by a path $P=u_0,\ldots,u_{k+1}$,  join $v$ and $u_0,u_{k+1}$ by edges, and 
set $\delta(u_0)=\ldots=\delta(u_{k+1})=2$ (see Fig.~\ref{fig:rule-five-six} (b).

\medskip
The next rule is applied to pairs of distinct vertices $u,v\in B$.

\begin{figure}[ht]
\centering\scalebox{0.7}{\input{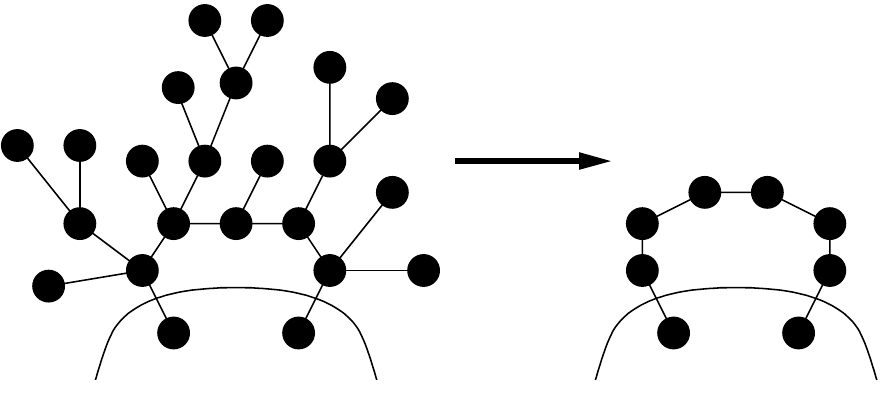_t}}
\caption{Modification of $G$ by Rule 7.
\label{fig:rule-seven}}
\end{figure}

\medskip
\noindent
{\bf Rule 7.} If $\{u,v\}\in B$ is a cut set of $G$ and there is a component $T$ of $G-\{u,v\}$ such that  i) $T$ is a tree, ii) $V(T)\subseteq V(G)\setminus B$,  iii) $T$ has a
unique vertex adjacent to $u$ and a unique vertex  adjacent to $v$ and has no vertices adjacent to  both $u$ and $v$,
and iv)  $|V(T)|\geq (k/2+2)d+1$, then 
replace $T$ by a path $P=u_0,\ldots,u_{k+1}$,  join $u$ with  $u_0$ and $v$ with $u_{k+1}$ by edges, and 
set $\delta(u_0)=\ldots=\delta(u_{k+1})=2$ (see Fig.~\ref{fig:rule-seven}).

\begin{lemma}\label{lem:rule-six}
Let $G'$ be the graph obtained by the application of Rule 6 from $G$ for $v\in B$ (Rule 7 from $G$ for $u,v\in B$), and denote by $\delta'$ the modified function $\delta$.  Then
$(G',\delta',d,k)$ is a feasible instance of 
\textsc{Edge Editing to a Connected Graph of Given Degrees}, and 
$(G,\delta,d,k)$ is a YES-instance of \textsc{Edge Editing to a Connected Graph of Given Degrees} if and only if $(G',\delta',d,k)$ is a YES-instance.
\end{lemma}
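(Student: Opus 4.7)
My plan is to follow the same template used for Lemmas~\ref{lem:rule-three}, \ref{lem:rule-four}, and especially~\ref{lem:rule-five}, adapted to the ``two-attachment'' setting. Feasibility of $(G',\delta',d,k)$ is immediate: every new path vertex $u_i$ has $\delta'(u_i)=2\le d$, and the cut vertex $v$ in Rule~6 (resp.\ each of $u,v$ in Rule~7) has the same number of incident edges before and after the gadget swap, so its degree demand is preserved. The key structural point driving the choice of replacement is that the subgraph $T^\ast=G[V(T)\cup\{v\}]$ is \emph{unicyclic} in Rule~6 (the tree $T$ together with two pendants to $v$ gives exactly one cycle on $|V(T)|+1$ vertices and $|V(T)|+1$ edges), while $T^\ast=G[V(T)\cup\{u,v\}]$ is a \emph{tree} in Rule~7 (an extra vertex, an extra edge); correspondingly, the new gadget is a cycle through $v$ in Rule~6 and a $u$--$v$ path in Rule~7, which makes the component arithmetic of removing $s$ edges match in either case.

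For the forward direction, I would take a solution $(D,A)$ of $(G,\delta,d,k)$, set $\{a_1,\ldots,a_s\}=D\cap E(T^\ast)$, and partition the remaining additions into $A_1,A_2,A_3$ according to whether both endpoints lie in $V(T^\ast)$, both lie outside, or the edge crosses. Following the $h$, $h_1$, $h_2$ bookkeeping from the proof of Lemma~\ref{lem:rule-five} at the boundary, but now monitoring up to two special vertices simultaneously, I would use Observation~\ref{obs:comp} to lower-bound the number of components of the restriction of $G-D+A_1+A_2$ to $V(T^\ast)$ (one fewer component in the unicyclic Rule~6 case than in the tree Rule~7 case) and to match it against the total deficit inside $T^\ast$. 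Then I would pick $t$ pairwise non-adjacent edges $a_i'=u_{2i-1}u_{2i}$ inside the new path $P$ to delete, build $A'$ from $A_2$ by greedily assigning external terminals to the freshly created terminals inside $P$, and invoke Lemma~\ref{lem:rearrange-comp} to produce a solution of $(G',\delta',d,k)$.

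Conversely, starting from a solution $(D',A')$ of $(G',\delta',d,k)$, the standard exchange step lets me assume no edge of $A'$ has both endpoints in the new gadget, so $D'$ meets the gadget in a set $\{a_1',\ldots,a_s'\}$ with $s\le k/2$. I would then select $s$ pairwise non-adjacent edges $a_1,\ldots,a_s$ in $T$, none of them incident to the one or two vertices of $T$ adjacent to the original cut boundary; the hypothesis $|V(T)|\ge(k/2+2)d+1$ is tailored exactly to make greedy leaf-pruning succeed here, since each round removes at most $2d-1$ vertices and the forbidden neighborhoods contribute at most $2d$ more. The step I expect to be the main obstacle is precisely this boundary bookkeeping: the edges $vu_0,vu_{k+1}$ in Rule~6 (or $uu_0,vu_{k+1}$ in Rule~7) may lie in $D$ or $D'$, and their effect on the deficits at $v$ (resp.\ at both $u$ and $v$) must be tracked simultaneously, which roughly doubles the case analysis compared with Lemma~\ref{lem:rule-five}; once this accounting is done correctly, the rest of the argument is routine.
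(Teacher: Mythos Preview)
Your proposal is essentially correct and follows the same template as the paper's proof. A few points of comparison:

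\textbf{Unification versus case split.} The paper proves Rule~7 and then declares Rule~6 to be the degenerate case $u=v$, which is exactly dual to your observation that $G[V(T)\cup\{v\}]$ is unicyclic in Rule~6 while $G[V(T)\cup\{u,v\}]$ is a tree in Rule~7; the paper's line ``$F_1$ has at least $s+1-|A_1|$ components if $u\neq v$ and at least $s-|A_1|$ components if $u=v$'' is precisely your unicyclic/tree component arithmetic.

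\textbf{Backward edge selection.} Here the paper does something you do not mention. It lets $R$ be the unique $(u,v)$-path through $T'$ and uses that $V(T)\cap B=\emptyset$ together with the definition of $B_2$ to conclude that $R$ has length at least~$7$. It then \emph{starts} the greedy selection with two specific edges $e_1,e_2$ lying on $R$ at distance two from $u',v'$ (the length bound guarantees $e_1\neq e_2$ and they are non-adjacent), and only afterwards continues with leaf-pruning. Your pure leaf-pruning variant also works---the size bound $|V(T)|\ge (k/2+2)d+1$ is generous enough, and Lemma~\ref{lem:rearrange-comp} only needs positive deficit per component, not any particular distribution of terminals---but you should be aware that this is where the ``paths of length at most~6'' clause in the definition of $B_2$ is actually invoked in the paper.

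\textbf{One small omission.} In the backward direction you assert that the standard exchange lets you assume no edge of $A'$ has both endpoints in the gadget. For the cycle gadget of Rule~6 this is fine (cf.\ the proof of Lemma~\ref{lem:rule-four}), but for the path gadget of Rule~7 the paper allows one possible exception, an edge $u_1u_{2s}$, and handles it by setting $t=s-1$ instead of $t=s$. This is a detail you would need to track.
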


\begin{proof}
We prove the lemma for Rule 7. The proof for Rule 6 is done by the same arguments with the assumption that $u=v$.
We can assume that $G$ was modified by the rule, i.e., a tree $T$ was replaced by a path  $P=u_0,\ldots,u_{k+1}$. 
Since $\delta'(u_j)\leq 2$ for $j\in\{0,\ldots,k+1\}$, $d_{G'}(u)=d_G(u)=\delta(u)$ and $d_{G'}(v)=d_G(v)=\delta(v)$ we have that $(G',\delta',d,k)$ is a feasible instance of 
\textsc{Edge Editing to a Connected Graph of Given Degrees}.

Let $u'',v''$ be the vertices of $T$ adjacent to $u$ and $v$ respectively, and denote by $T'$ the graph obtained from $T$ by the addition the vertices $u,v$ and the edges $uu',vv'$.
Denote by $R$ the unique $(u,v)$-path that goes through $T'$. By the definition of $B$, $R$ has length at least 7.

Suppose that $(G,\delta,d,k)$ is a YES-instance of \textsc{Edge Editing to a Connected Graph of Given Degrees}. 
Let $(D,A)$ be a solution. If $D$ has no edges incident to the vertices of $T$, then $(D,A)$ is a solution for $(G',\delta',d,k)$. Hence, we can assume that  
$D\cap E(T')=\{a_1,\ldots,a_s\}\neq\emptyset$. 

If $uu'\in D$, then we select an edge of $A$ incident to $u$, and if $vv'\in D$, then we select an edge of $A$ incident to  $v$ (notice that if $uu',vv'\in D$, we the same edge could be selected). 
Denote by $B$ the set obtained from $A$ by the deletion of the selected edges, and let 
$h$ be the number of selected edges and assume that $h_1$ selected edges join $u$ and $v$ with  vertices of $T$ and the remaining $h_2=h-h_1$ edges join $u,v$ with vertices of $G-V(G_i)$. 
Let $A_1,A_2,A_3$ be the partition of $B$ ($A_1,A_2$ can be empty) such that for any $xy\in A_1$, $x,y\in V(T)$, for each $xy\in A_2$, $x,y\in (V(G)\setminus V(T))$,
 and the edges of $A_3$ join vertices in $V(T)$ with vertices in $(V(G)\setminus V(T))$.

Consider $F=G-D+A_1+A_2$. Notice that for any vertex $x\in V(F)$, $d_{F}(x)\leq \delta(x)$.
By Observation~\ref{obs:comp}, $G-D$ has at most $|A_1|+|A_2|+|A_3|+h+1$ components. Hence, $F$ has at most $|A_3|+h+1$ components. 

Denote by $F_1$ the subgraph of $F$ obtained from $T'$ by the deletion of edges of $D$. 
Let $t=s-h_1-|A_1|$. Because $T$ is a tree, $F_1$ has at least $s+1-|A_1|$ components if $u\neq v$ and at least $s-|A_1|$ components if $u=v$.
Notice that the total deficit of the terminals in $V(T)$ is $2s-h_1-2|A_1|$. Notice that because $(D,A)$ is a solution, the remaining terminals have the same total deficit and it is equal $|A_3|+h_2$.
Observe also that each component of $F$ has a positive deficit.

Consider the edges $a_i'=u_{2i-1}u_{2i}$ of $P$ for $i\in\{1,\ldots,t\}$. Let $F_2$ be the graph obtained from $P$ by the deletion of the edges $a_1',\ldots,a_t'$. Notice that $F_2$ has $t$ components,
$\df(F_2)=2t$, each component of $F_2$ has a positive deficit, 
the terminals of $F_2$, i.e., the vertices with positive deficits, are pairwise distinct and each terminal has the deficit one.

We construct the pair $(D',A')$ where $D'\subseteq E(G')$ and $A'\subseteq \binom{V(G')}{2}\setminus E(G')$
for $(G',\delta',d,k)$ as follows. We set $D'=(D\setminus\{a_1,\ldots,a_t\})\cup\{a_1',\ldots,a_s'\}$.
Initially we include in $A'$ the edges of $A_2$.
For each terminal $u$ of $G-V(G_i)$, we add edges that join this terminal with distinct terminals of $F_2'$ in the greedy way to satisfy its deficit. Notice that we add $|A_3|+h_2$ edges.
By the choice of $t$,  we always can construct $A'$ in the described way. Moreover, $|A'|=|A|-h_1-|A_1|+h_2$ and
the number of components of $G-D$ is at least the number of components of $G'-D'$ minus $h_1+|A_1|$. Let $G''=G'-D'+A'$. For any vertex $v$ of $G''$, $d_{G''}(v)=\delta(v)$. 
By  Lemma~\ref{lem:rearrange-comp}, the instance   $(G',\delta',d,k)$  has a solution.

Suppose now that $(G',\delta',d,k)$ is a YES-instance of \textsc{Edge Editing to a Connected Graph of Given Degrees}, and let $(D',A')$ be a solution. We show that $(G,\delta,d,k)$ has a solution using symmetric arguments. 

Let $\{a_1',\ldots,a_s'\}=D\cap (E(P)\cup\{vu_0\})$. Clearly,  $s\leq k/2$ and it can be assumed that $\{a_1',\ldots,a_s'\}\neq\emptyset$. 
Because $P$ is a path, it can be assumed that $a_i'=u_{2i-1}u_{2i}$ for $i\in\{1,\ldots,s\}$, and it simplifies the arguments.
Observe also that because $P$ is a path, we can assume without loss of generality that there is no edges in $A$ with the both end-vertices in $P$ except, possibly, $u_1u_{2s}$. In the last case $t=s-1$ and otherwise $t=s$.
Because we apply the same arguments as above, it is sufficient to explain how we find the edges of $G_i$ that replace $a_1',\ldots,a_s'$ in $D'$.

The tree $T$ has at least $kd/2+2d$ edges. We select $s$ pairwise non-adjacent edges $a_1,\ldots,a_s$ in $T$ that are not incident to $u',v'$
in the greedy way. Let $u''\neq u$ be the vertex incident to $u'$ in $R$ and let $v''\neq v$ be the vertex incident to $v'$ in $R$. Let $e_1\neq u'u''$ be the edge incident to $u''$ in $R$ and 
let $e_2\neq v'v''$ be the edge incident to $v''$ in $R$. Because $R$ has length at least 7, $e_1\neq e_2$ and these edges are not adjacent.
We start the greedy choice by selecting $a_1=e_1$ and $a_2=e_2$, then we delete them together with the incident vertices and the adjacent edges. We proceed by 
 selecting an edge incident with a leaf that is not adjacent to the vertices adjacent to $u,v$, 
 and then delete the edge, the incident vertices, and the adjacent edges. 
As the maximum degree of $T$ is at most $d$ and $s\leq k/2$, we always find $a_1,\ldots,a_s$.   
\end{proof}

It is straightforward to see that Rules 1--7 can be applied in polynomial time. Also Lemmas~\ref{lem:rule-two}--\ref{lem:rule-six} prove that we obtain an equivalent instance of  \textsc{Edge Editing to a Connected Graph of Given Degrees}. To show that we have a polynomial kernel, it remains to get an upper bound for the size of the obtained graph.

\begin{lemma}\label{lem:size}
Let $(G\rq{},\delta\rq{},d,k)$ be the instance of \textsc{Edge Editing to a Connected Graph of Given Degrees} obtained
from $(G,\delta,d,k)$. Then $|V(G\rq{})|=O(kd^3(k+d)^2)$.
\end{lemma}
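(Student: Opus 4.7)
The plan is to bound $|V(G')|$ by partitioning it into several regions and bounding each separately.

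Apply Rule~1 first: $|Z| \le 2k$ and $G'$ has at most $k+1$ components, so $|N_{G'}^2[Z]| = O(kd^2)$ and $G'-Z$ has $O(kd)$ connected components (removing each vertex of $Z$ raises the component count by at most $d-1$).

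For each component $G_i$ of $G'-Z$ write $V(G_i) = (V(G_i) \cap N_{G'}^2[Z]) \cup W_i$. Summing the boundary parts over all components gives $O(kd^2)$. For the interior $W_i$ the main tool is Rule~2: either the component was replaced by a gadget of size $O(k+\ell_i)$ (where $\ell_i$ counts the components of $G'[N_{G'}^2[Z]\cap V(G_i)]$), or Rule~2's precondition failed. The precondition can fail in two ways: $|W_i| \le \ell_i + 2k + 1$ holds already, or the maximum matching in the interior non-spanning-tree edges has size less than $k/3$. In the latter case every such non-tree edge must touch the $O(k)$ matched vertices, so there are at most $O(kd)$ of them, bounding the interior cyclomatic number by $O(kd)$ and making the interior essentially tree-like.

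In the tree-like residue Rules~3--7 take over. Trees and unicyclic components disconnected from $Z$ are reduced to $O(k)$ vertices by Rules~3 and~4. For the rest, work with the branch set $B = B_1 \cup B_2$. Bound $|B_1|$ by $|N_{G'}[Z]|$ plus the number of degree-${\ge}3$ vertices of $\hat G$; the latter is controlled by the handshake inequality $n_{\ge 3} \le \ell + 2c - 2$ per connected component (with $\ell$ leaves and cyclomatic number $c$), together with the fact that all leaves of $\hat G$ sit in $N_{G'}[Z]$ and that $c(\hat G)$ is bounded by the non-tree edges counted in the previous step plus the $\lfloor k/3\rfloor$ cycles contributed by each Rule~2 gadget. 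Then $|B_2| \le O(d)\cdot|B_1|$ since each $B_1$-vertex has $O(d)$ degree-$2$ vertices within short-path distance. After Rules~5,~6,~7, every tree appendage off a branch vertex and every tree connector between two branch vertices has at most $(k/2+2)d+1 = O(kd)$ vertices (or $O(k)$ if replaced).

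Summing the contributions ($|Z|$, $|N_{G'}^2[Z]|$, all $|W_i|$'s, and the total size of the $O(|B|)$ tree pieces, each of size $O(kd)$) yields the claimed $O(kd^3(k+d)^2)$ bound after careful bookkeeping. The main obstacle is the ``matching fails, component still large'' sub-case of Rule~2: I have to combine the $O(kd)$ interior cyclomatic bound with the branch-and-connector bounds of Rules~5--7 to ensure every $|W_i|$ is controlled, and then track leaves, cyclomatic numbers, and branch-vertex counts consistently across all the rules---that is the most delicate part.
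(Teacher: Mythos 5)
Your proposal follows essentially the same route as the paper: bound $|Z|$, $|N_G^2[Z]|$ and the number of components via Rule~1, use the matching/cyclomatic-number dichotomy of Rule~2 to dispose of dense interiors, and then count branch vertices $B_1$, the short-path vertices $B_2$, and the tree appendages controlled by Rules~3--7. The one place where your accounting is looser than the paper's is the bound $|B_2|=O(d)\cdot|B_1|$: plugged into the final sum $(|B_1|+|B_2|)\cdot O(kd+d^2)$ this costs an extra factor of $d$ and yields $O(kd^4(k+d)^2)$ rather than the claimed $O(kd^3(k+d)^2)$. The fix is already in your own argument: since the total cyclomatic excess of the surviving interiors is bounded (by the failed-matching case of Rule~2), the number of maximal degree-$2$ connector paths between $B_1$-vertices is $O(|B_1|+k^2d^2)$ rather than $O(d\,|B_1|)$, each contributing $O(1)$ vertices to $B_2$, which gives $|B_2|=O(|B_1|)$ as in the paper and recovers the stated bound.
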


\begin{proof}
By Rule~1, we have that $|Z|\leq 2k$ and $|N(Z)|\leq 2kd+s\leq 2k(d+1)$. Respectively, $|N^2_G[Z]|\leq 2k(d(d+1)+1)$ and $|N_G^3(Z)|\leq 2k(d+1)(d-1)^2$. Also by Rule~1,
the number of components of $G-Z$ is at most $|N_G(Z)|+k\leq 2k(d+1)+k$.

Consider a component $G_i$ of $G-Z$, $i\in\{1,\ldots,p\}$. Let $G_i\rq{}$ be the graph obtained from $G_i$ by the deletion of the vertices of $N_G^2[Z]$. Notice that if $|E(G_i\rq{})|-|V(G_i\rq{})|\geq (2d-3)k/3-1$, then for any spanning tree $T$ of $G_i$, there is a matching $M\subseteq E(G_i)\setminus E(T)$ of size at least $k/3$, because $M$ can be constructed by the greedy algorithm. 
Since the number of components of $G[V(G_i)\cap N_G^2[Z]]$ is at most $|N_G(Z)|\leq 2k(d+1)$, if additionally $|V(G_i\rq{})|>2k(d+2)+1$, then $G_i$ is modified by Rule~2.

Notice that we have the worst case if Rules 3 and 4 are not applied.
We have the worst case if we do not apply this rule and Rules~3 and 4.
It follows that in the worst case $G$ has at most $b_1=2(|N_G^3(Z)|+2((2d-3)k-2)(2k(d+1)+k))-2$ branch vertices in $B_1\cap (V(G)\setminus N_G^2[Z])$. Also in the worst case Rules~6 and 7 are not applied, and all other vertices of $G_1\rq{},\ldots,G_p\rq{}$ that are on the paths, that join vertices of $B_1$ with each other, are in $B_2$, and we have at most $b_2=4(b_1-1)$ such vertices.  Taking into account Rule~5, we have that $|V(G_1\rq{})|+\ldots+|V(G_p\rq{})|\leq (b_1+b_2)(kd/2+d^2+1)$. 
 We conclude that $|V(G\rq{})|=O(kd^3(k+d)^2)$. 
\end{proof}

\section{FPT algorithm for Edge Editing to a Connected Regular Graph}\label{sec:reg}
In this section we construct an \classFPT-algorithm for {\sc Edge Editing to a Connected Regular Graph} with the parameter $k$ ($d$ is a part of the input here) and prove the following theorem.

\begin{theorem}\label{thm:regular}
 {\sc Edge Editing to a Connected Regular Graph} can be solved in time $O^*(k^{O(k^3)})$.
\end{theorem}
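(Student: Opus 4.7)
First I would apply Rule~1 of Section~\ref{sec:kernel}, which remains valid when $\delta\equiv d$, so we may assume $|Z|\leq 2k$, $s=\sum_{v}|d_G(v)-d|\leq 2k$, and $G$ has at most $k+1$ components. Any solution $(D,A)$ induces, by Lemma~\ref{lem:alt}, a graph $H(D,A)$ with $|V(H)|\leq 2k$ and $|E(H)|\leq k$, together with a decomposition into at most $k$ edge-disjoint $(D,A)$-alternating trails whose non-closed members have both endpoints in $Z$. This combinatorial skeleton is what I would enumerate.

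The algorithm splits on the size of $d$. If $d$ is bounded by a polynomial in $k$, the kernel of Theorem~\ref{thm:kernel} produces an equivalent instance of size $k^{O(1)}$, and brute-forcing all $(D,A)$ with $|D|+|A|\leq k$ finishes in time $k^{O(k)}\leq k^{O(k^{3})}$. In the remaining regime $d\gg k$ I would enumerate a canonical description of a solution consisting of: (i) the isomorphism type of $H(D,A)$ with its $D/A$ edge-labelling and an identification of each of its at most $2k$ vertices with either an explicit vertex of $Z$ or an ``intermediate slot'', which by Lemma~\ref{lem:alt}(ii) must have equal $D$-degree and $A$-degree; (ii) for each intermediate slot, the component of $G-Z$ in which it sits (at most $O(k)$ choices, since by Rule~1 there are at most $k+1$ components); and (iii) for every touched component of $G$, interface data recording, at each local vertex of $V(H)$, exactly how many incident edges of $D$ and of $A$ stay inside the component and how many cross out. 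Steps (i)--(iii) together yield at most $k^{O(k^{3})}$ configurations.

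For each configuration I would then realise the prescribed edits inside each component by reducing to a polynomial-time degree-constrained subgraph problem: once the interface data are fixed, finding $D\cap E(C)$ and $A\cap\binom{V(C)}{2}$ satisfying the prescribed per-vertex increments is a standard $f$-factor / $T$-join computation. Global connectivity of $G-D+A$ is then checked by contracting each component of $G-D$; since by Observation~\ref{obs:comp} there are at most $|A|+1\leq k+1$ components, the resulting ``gluing multigraph'' has $O(k)$ vertices and connectivity can be decided in polynomial time. Verifying all configurations this way gives the claimed $O^{*}(k^{O(k^{3})})$ bound.

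\textbf{Main obstacle.} The delicate ingredient, and the point at which regularity is essential, is a normalisation lemma analogous to Lemma~\ref{lem:sel-edges} showing that intermediate vertices in $V(H)\setminus Z$ may be drawn from a canonical set of representatives of their component of size polynomial in $k$, rather than from all of $V(G)$; without such a lemma the interface data above would have to be guessed over an unbounded vertex set and the algorithm would not be FPT. Closed $(D,A)$-alternating trails lying entirely outside $Z$ are the most troublesome case, since they have no endpoints anchored in $Z$ to restrict their location. The plan is to exploit $\delta\equiv d$ to argue that, inside a $d$-regular component of $G-Z$, any two ``generic'' vertices play the same role in the edit: we can reroute the trail through a bounded local substructure without changing the degree sequence or the effect on connectivity, and thus confine the enumeration to polynomially many canonical positions per component.
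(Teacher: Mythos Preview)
Your proposal has a concrete error and leaves the central difficulty unresolved.

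First, the claim that there are ``at most $O(k)$ choices'' for the component of $G-Z$ containing an intermediate slot is false: Rule~1 bounds the number of components of $G$, not of $G-Z$. Removing $Z$ can split a single component into as many as $|N_G(Z)|=\Theta(kd)$ pieces, and in the regime $d\gg k$ this is unbounded in $k$. So step~(ii) of your enumeration already fails to stay within $k^{O(k^3)}$.

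Second, and more fundamentally, the normalisation lemma you identify as the main obstacle is exactly what the whole argument hinges on, and the sketch you give (``generic vertices play the same role'', ``reroute through a bounded local substructure'') is not a proof. Without it, neither the enumeration bound nor the reduction of the per-component realisation step to a polynomial $f$-factor computation is justified: the $f$-factor formulation presupposes you already know \emph{which} vertices carry nonzero prescribed increments, and that is precisely what you are trying to avoid guessing.

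The paper sidesteps both issues by working at a coarser level than the isomorphism type of $H(D,A)$. It (a) partitions the components of $G-Z$ into at most $(k+2)^{2k}$ \emph{types}, where the type of $G_i$ records, for each $z\in Z$, the value $\min\{|N_G(z)\cap V(G_i)|,\,k{+}1\}$; (b) enumerates records $L=(s,\Theta,C,R,D_Z,A_Z)$ specifying only aggregate interface data --- how many components are touched and of which types, the bipartite degree sequences of the added edges between each pair of touched components, the net change in the number of edges between each $z_j$ and each touched component, and the edits inside $Z$; (c) for each record, selects the actual components to instantiate the $s$ slots via a minimum-weight perfect matching in an auxiliary bipartite graph, where the weight of assigning slot $h$ to component $G_j$ is the optimum of a cost-weighted instance of \textsc{Edge Editing to a Graph of Given Degrees}, solved in polynomial time by Theorem~\ref{thm:edit-cost}. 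The point is that one never guesses intermediate vertices at all: the polynomial-time subroutine finds them. Connectivity is not handled by a post-hoc check but by Lemma~\ref{lem:connect}, which uses $d>3k+1$ to show that each local optimum can be rerouted so that the touched component remains internally connected (every piece of $G_i-D$ still contains a cycle, so a deleted edge can always be replaced by a non-bridge). The type equivalence, together with the record data, then guarantees that global connectivity transfers from the hypothetical solution to the constructed one. This typing-plus-matching machinery is what replaces your missing normalisation lemma.
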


\subsection{Preliminaries}\label{sec:prelim-alg}
We need the result obtained by Mathieson and Szeider in~\cite{MathiesonS12}.  Let $G$ be a graph, and let $\rho\colon \binom{V(G)}{2}\rightarrow\mathbb{N}$ be a \emph{cost} function that for any two distinct vertices $u,v$ defines the cost $\rho(uv)$ of the addition or deletion of the edge $uv$. For a set of unordered pairs $X\subseteq \binom{V(G)}{2}$, $\rho(X)=\sum_{uv\in X}\rho(uv)$. 
Suppose that a graph $G\rq{}$ is obtained from $G$ by some edge deletions and additions. Then the \emph{editing cost} is $\rho((E(G)\setminus E(G\rq{}))\cup (E(G\rq{})\setminus E(G)))$.
Mathieson and Szeider considered the following problem:
\begin{center}
\begin{boxedminipage}{.99\textwidth}
\textsc{Edge Editing to a Graph of Given Degrees with Costs}\\
\begin{tabular}{ r l }
\textit{~~~~Instance:} & A graph $G$, a non-negative integer $k$, a degree function \\ &
                                   $\delta\colon V(G)\rightarrow\mathbb{N}$ and   a cost function $\rho\colon\binom{V(G)}{2}\rightarrow\mathbb{N}$.\\
\textit{Question:} & Is it possible to obtain a graph $G'$ from $G$ such that \\ &
$d_{G'}(v)=\delta(v)$ for each $v\in V(G')$ by  
edge deletions and \\ & additions with editing cost at most $k$?  \\
\end{tabular}
\end{boxedminipage}
\end{center}
They proved the following theorem.

\begin{theorem}[{\cite[Theorem~5.1]{MathiesonS12}}]\label{thm:edit-cost}
{\sc Edge Editing to a Graph of Given Degrees with Costs} can be solved in polynomial time.
\end{theorem}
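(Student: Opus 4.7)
The plan is to reduce \textsc{Edge Editing to a Graph of Given Degrees with Costs} to a minimum-weight degree-constrained subgraph (equivalently, $b$-matching) problem on the complete graph over $V(G)$, and then further to a minimum-weight perfect matching problem on a general graph --- a classical problem solvable in polynomial time by Edmonds' algorithm.

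First, I would reformulate the optimization. Any candidate output graph $G'$ with $V(G')=V(G)$ and $d_{G'}(v)=\delta(v)$ has editing cost $\rho(E(G)\triangle E(G'))$. Define a weight $w\colon\binom{V(G)}{2}\to\mathbb{Z}$ by $w(uv)=-\rho(uv)$ if $uv\in E(G)$ and $w(uv)=\rho(uv)$ otherwise. A direct calculation gives $\rho(E(G)\triangle E(G'))=\rho(E(G))+\sum_{uv\in E(G')}w(uv)$. Since $\rho(E(G))$ is a constant, minimizing the editing cost is equivalent to finding a spanning subgraph $G'$ of the complete graph $K$ on vertex set $V(G)$ that realizes the degree sequence $\delta$ and minimizes the $w$-weight of its edge set.

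Second, I would apply the standard Tutte-style gadget reduction from $b$-matching to perfect matching. If $\delta(v)>|V(G)|-1$ for some $v$, the instance is trivially infeasible. Otherwise, construct an auxiliary graph $H$ as follows. For each $v\in V(G)$, introduce $|V(G)|-1$ \emph{slot} vertices $v^u$ (one for every $u\neq v$) and $|V(G)|-1-\delta(v)$ \emph{filler} vertices, joined by a complete bipartite graph of zero weight; for every unordered pair $\{u,v\}$, add a \emph{cross edge} $u^vv^u$ of weight $w(uv)$. Every perfect matching in $H$ uses exactly $\delta(v)$ cross edges incident to the slots of $v$ (because the $|V(G)|-1-\delta(v)$ remaining slots must be matched to the fillers of $v$, which is possible by K\H{o}nig's theorem on complete bipartite graphs). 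Conversely, any choice of $G'$ induces a perfect matching in $H$ of the same weight by matching the remaining slots to the fillers arbitrarily. This is a weight-preserving correspondence between perfect matchings of $H$ and spanning subgraphs of $K$ with degree sequence $\delta$.

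Finally, I would compute a minimum-weight perfect matching in $H$ using Edmonds' algorithm, which runs in polynomial time and accepts arbitrary integer weights (including the negative ones arising from edges of $G$). The instance is a YES-instance if and only if this minimum weight plus $\rho(E(G))$ is at most $k$. The main obstacle is verifying the correctness of the Tutte gadget, in particular ensuring that the choice of which slots are matched to fillers is irrelevant to the total weight (true since filler edges have weight zero) and that the bijection between perfect matchings of $H$ and valid target graphs $G'$ is weight-preserving; both points follow from the matching count $|V(G)|-1-\delta(v)$ matching exactly the number of fillers at each $v$.
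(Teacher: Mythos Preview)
The paper does not prove this theorem; it merely cites it as Theorem~5.1 of Mathieson and Szeider~\cite{MathiesonS12}, so there is no in-paper argument to compare against. Your reduction is correct and is essentially the classical route: rewrite the editing cost as a linear function of the edge set of the target graph, then reduce the resulting minimum-weight degree-constrained subgraph problem to minimum-weight perfect matching via the Tutte gadget. This is in fact the same style of argument Mathieson and Szeider use in their paper.

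Two minor remarks. First, invoking K\H{o}nig's theorem to justify that the unused slots at a vertex can be matched to the fillers is overkill: the slot--filler graph is complete bipartite with at least as many slots as fillers, so any injection of fillers into slots works. Second, you might add a sentence that Edmonds' algorithm either returns a minimum-weight perfect matching or correctly reports that no perfect matching exists (hence no $G'$ with the prescribed degrees exists), so that infeasibility is detected as well. Neither point affects the validity of your argument.
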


We also need some results about graphic sequences for bipartite graphs.
Let $\alpha=(\alpha_1,\ldots,\alpha_p)$ and $\beta=(\beta_1,\ldots,\beta_q)$ be non-increasing sequences of positive integers. We say that the pair $(\alpha,\beta)$ is a \emph{bipartite graphic pair} if there is a bipartite graph $G$ with the bipartition of the vertex set $X=\{x_1,\ldots,x_p\}$, $Y=\{y_1,\ldots,y_q\}$ such that $d_G(x_i)=\alpha_i$ for $i\in \{1,\ldots,p\}$ and $d_G(y_j)=\beta_j$ for $j\in\{1,\ldots,q\}$.  It is said that $G$ \emph{realizes} $(\alpha,\beta)$.

Gale and Ryser~\cite{Ryser63} gave necessary and sufficient conditions for $(\alpha,\beta)$ to be a bipartite graphic pair. It is more convenient to give them in the terms of partitions of integers. 
Recall that a non-increasing sequence of positive integers $\alpha=(\alpha_1,\ldots,\alpha_p)$ is a \emph{partition} of $n$ if $\alpha_1+\ldots+\alpha_p=n$. 
A sequence  
$\alpha=(\alpha_1,\ldots,\alpha_p)$  \emph{dominates}  $\beta=(\beta_1,\ldots,\beta_q)$ if $\alpha_1+\ldots+\alpha_i\geq \beta_1+\ldots+\beta_i$ for all $i\geq 1$; to simplify notations, we
assume that $\alpha_i=0$ ($\beta_i=0$ respectively) if $i>p$ ($i>q$ respectively). We write $\alpha\unrhd\beta$ to denote that $\alpha$ dominates $\beta$. 
Clearly, if $\alpha\unrhd\beta$ and $\beta\unrhd\gamma$, then $\alpha\unrhd\gamma$.
For a partition $\alpha=(\alpha_1,\ldots,\alpha_p)$ of $n$, the partition  $\alpha^*=(\alpha^*_1,\ldots,\alpha_{\alpha_1}^*)$ of $n$, where 
$\alpha_j^*=|\{h|1\leq h\leq p,\alpha_h\geq j\}|$ for $j\in\{1,\ldots,\alpha_1\}$, is called the \emph{conjugate} partition for $\alpha$. Notice that $\alpha^{**}=\alpha$.

\begin{theorem}[Gale and Ryser~\cite{Ryser63}]\label{thm:GR}
A pair of non-increasing sequences of positive integers $(\alpha,\beta)$ is a bipartite graphic pair if and only if $\alpha$ and $\beta$ are partitions of some positive integer $n$ and 
$\alpha^*\unrhd\beta$.
\end{theorem}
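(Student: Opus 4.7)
The plan is to handle the two directions of this classical characterization separately.

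For necessity, suppose $G$ with bipartition $X=\{x_1,\ldots,x_p\}$ and $Y=\{y_1,\ldots,y_q\}$ realizes $(\alpha,\beta)$. Both $\sum_i\alpha_i$ and $\sum_j\beta_j$ equal $|E(G)|$, so the two sequences are partitions of the same positive integer $n$. For the dominance condition, fix $k\geq 1$ and count the edges with $Y$-endpoint in $\{y_1,\ldots,y_k\}$ in two ways: this number is $\sum_{j=1}^k\beta_j$, while each $x_i$ contributes at most $\min(\alpha_i,k)$ such edges. The identity $\sum_{i=1}^p\min(\alpha_i,k)=\sum_{j=1}^k\alpha_j^*$, which holds because both sides count pairs $(i,j)$ with $\alpha_i\geq j$ and $j\leq k$, yields $\sum_{j=1}^k\beta_j\leq\sum_{j=1}^k\alpha_j^*$, i.e., $\alpha^*\unrhd\beta$.

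For sufficiency I would use induction on $n$, the common sum. The base $n=0$ is vacuous. In the inductive step, take the vertex $y_1$ of largest $Y$-degree $\beta_1$ and attach it greedily to the $\beta_1$ vertices of largest $\alpha$-degree, namely $x_1,\ldots,x_{\beta_1}$. Deleting $y_1$ leaves reduced sequences $\alpha'$ (obtained from $\alpha$ by subtracting $1$ from its first $\beta_1$ entries and re-sorting non-increasingly) and $\beta'=(\beta_2,\ldots,\beta_q)$. If $(\alpha',\beta')$ still satisfies the Gale--Ryser hypotheses, induction produces a bipartite graph realizing $(\alpha',\beta')$, and reattaching $y_1$ with its prescribed edges produces the desired graph for $(\alpha,\beta)$.

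The main obstacle is verifying that the conjugate dominance survives the reduction. A short direct computation shows that $(\alpha')^*$ is obtained from $\alpha^*$ by decrementing its first $\beta_1$ entries by $1$; the greedy choice of attaching $y_1$ to the $\beta_1$ highest-degree $x$-vertices is exactly what forces this clean description, since any other choice would either violate the $0/1$ edge constraint or produce a sequence whose conjugate is strictly dominated. Comparing prefix sums of $(\alpha')^*$ with those of $\beta'$ at every length $k$ then reduces, via the two cases $k<\beta_1$ and $k\geq\beta_1$, to the original hypothesis $\alpha^*\unrhd\beta$ evaluated at lengths $k$ and $k+1$, together with the monotonicity $\beta_1\geq\beta_{k+1}$. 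The delicate bookkeeping arises from ties among the $\alpha_i$ after re-sorting, which must be resolved so that the decrements land on the same conjugate coordinates before and after; the greedy choice of attaching to the leading block of $x$-vertices keeps this under control and closes the induction.
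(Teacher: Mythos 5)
The paper itself offers no proof of this statement: it is imported from Ryser's monograph as a known classical result, so there is nothing internal to compare against and your argument must stand on its own. The necessity half does: counting the edges meeting $\{y_1,\ldots,y_k\}$ in two ways and using $\sum_i\min(\alpha_i,k)=\sum_{j=1}^k\alpha_j^*$ is the standard, complete argument.

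The sufficiency half, however, hinges on a computational claim that is false. It is not true that $(\alpha')^*$ is obtained from $\alpha^*$ by decrementing its first $\beta_1$ entries: take $\alpha=(4,3,2)$ and $\beta_1=2$; then $\alpha'=(3,2,2)$, $\alpha^*=(3,3,2,1)$ and $(\alpha')^*=(3,3,1)$, while decrementing the first two entries of $\alpha^*$ would give $(2,2,2,1)$. Deleting the last cell of row $i$ of the Young diagram deletes a cell from \emph{column} $\alpha_i$, so the coordinates of $\alpha^*$ that drop are $\alpha_1,\ldots,\alpha_{\beta_1}$ (with multiplicity), not $1,\ldots,\beta_1$; this is a row/column transposition error, it persists even when the $\alpha_i$ are distinct, and so the ``ties after re-sorting'' you flag are not where the difficulty lies. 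The induction is still salvageable by the standard route: from $\min(\alpha_i-1,k)=\min(\alpha_i,k)$ for $\alpha_i>k$ and $\min(\alpha_i-1,k)=\min(\alpha_i,k)-1$ for $\alpha_i\le k$ one gets $\sum_{j=1}^k(\alpha')_j^*=\sum_{j=1}^k\alpha_j^*-\bigl(\beta_1-\min(\beta_1,\alpha^*_{k+1})\bigr)$, and the required inequality $\sum_{j=1}^k(\alpha')_j^*\ge\sum_{j=2}^{k+1}\beta_j$ then follows from $\alpha^*\unrhd\beta$ at length $k+1$ when $\alpha^*_{k+1}<\beta_1$, and from $\alpha^*\unrhd\beta$ at length $k$ together with $\beta_1\ge\beta_{k+1}$ otherwise. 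These are exactly the two reductions you announce, but the correct case split is on $\alpha^*_{k+1}$ versus $\beta_1$ rather than on $k$ versus $\beta_1$, and as written your derivation cannot reach them because it is routed through the incorrect description of $(\alpha')^*$. You should also record explicitly that dominance at length $1$ gives $\alpha^*_1\ge\beta_1$, so $y_1$ has enough vertices $x_1,\ldots,x_{\beta_1}$ to attach to.
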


By the straightforward reduction to the {\sc Maximum Flow} problem and the well-known fact that it can be solved in polynomial time (see, e.g, \cite{EdmondsK72}), we have the following lemma.

\begin{lemma}\label{lem:graphic-poly}
Let $(\alpha,\beta)$ be a bipartite graphic pair. Then a bipartite graph $G$ that realizes $(\alpha,\beta)$ can be constructed in polynomial time.
\end{lemma}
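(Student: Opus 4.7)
The plan is to reduce the realization problem to an instance of maximum flow and exploit integrality. First I would construct a network $N$ with source $s$, sink $t$, internal nodes $x_1,\ldots,x_p$ and $y_1,\ldots,y_q$, adding for each $i$ an arc $sx_i$ of capacity $\alpha_i$, for each $j$ an arc $y_jt$ of capacity $\beta_j$, and for every pair $(i,j)$ a middle arc $x_iy_j$ of capacity $1$. An integral $s$--$t$ flow of value $n := \sum_i \alpha_i = \sum_j \beta_j$ saturates every source and sink arc, and the set of middle arcs carrying one unit of flow is exactly the edge set of a bipartite graph $G$ that realizes $(\alpha,\beta)$. Conversely, every realization of $(\alpha,\beta)$ induces such a flow by routing one unit through the path $s,x_i,y_j,t$ for each edge $x_iy_j$ of the realization.

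Since $(\alpha,\beta)$ is assumed to be a bipartite graphic pair, Theorem~\ref{thm:GR} guarantees the existence of a realization, so by the correspondence above the maximum $s$--$t$ flow value in $N$ equals $n$. I would then compute an integral maximum flow in $N$ using, say, the Edmonds--Karp algorithm~\cite{EdmondsK72}, which runs in polynomial time because $N$ has $p+q+2$ nodes and $O(pq)$ arcs with capacities bounded by $n$. Reading off the middle arcs with unit flow yields the desired graph $G$, whose degree sequences match $\alpha$ and $\beta$ because the source and sink arcs are saturated.

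The only subtlety is integrality of the computed flow: this is what guarantees that the middle arcs carry either $0$ or $1$ unit and can therefore be interpreted as a simple bipartite graph, and it holds automatically because all capacities are integers and Edmonds--Karp returns an integral flow in this setting. No genuine obstacle is expected beyond invoking Theorem~\ref{thm:GR} to certify that the max-flow value is exactly $n$; all remaining steps are standard flow-network bookkeeping.
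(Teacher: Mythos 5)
Your proof is correct and is exactly the reduction the paper has in mind: the paper justifies this lemma in one line as ``the straightforward reduction to the \textsc{Maximum Flow} problem,'' citing \cite{EdmondsK72}, and your network with source arcs of capacity $\alpha_i$, sink arcs of capacity $\beta_j$, and unit-capacity middle arcs, together with integrality of the computed flow, is precisely that reduction spelled out.
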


We also need the following property.

\begin{lemma}\label{lem:graphic}
Let $\alpha,\alpha\rq{},\beta,\beta\rq{}$ be partitions of a positive integer $n$. If $(\alpha,\beta)$, $(\alpha\rq{},\alpha^*)$ and $(\beta^*,\beta\rq{})$ are bipartite graphic pairs, then $(\alpha\rq{},\beta\rq{})$ is a bipartite graphic par.
\end{lemma}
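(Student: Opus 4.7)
The plan is to prove this purely at the level of the dominance order via the Gale--Ryser criterion (Theorem~\ref{thm:GR}), without ever constructing the target bipartite graph by hand. The key observation is that conjugation swaps the two sides of the criterion, so the three hypotheses collapse into one transitive chain in $\unrhd$. First I would record the elementary facts used throughout: conjugation is an involution, i.e.\ $\beta^{**}=\beta$, and it preserves the sum, so $\alpha^{*}$, $(\alpha')^{*}$ and $\beta^{*}$ are themselves partitions of $n$; this is exactly what is needed for the three given pairs to even be admissible inputs to Gale--Ryser.

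Next, I would translate each of the three hypotheses into a dominance inequality. From $(\alpha,\beta)$ being a bipartite graphic pair, Theorem~\ref{thm:GR} yields $\alpha^{*}\unrhd\beta$. From $(\alpha',\alpha^{*})$ being graphic, the same theorem yields $(\alpha')^{*}\unrhd\alpha^{*}$. Finally, from $(\beta^{*},\beta')$ being graphic, it yields $\beta^{**}\unrhd\beta'$, which by the involution identity is just $\beta\unrhd\beta'$. Transitivity of $\unrhd$ is immediate from its definition in terms of partial sums, so I can concatenate these to obtain
\[
(\alpha')^{*}\;\unrhd\;\alpha^{*}\;\unrhd\;\beta\;\unrhd\;\beta'.
\]
Applying Theorem~\ref{thm:GR} in the reverse direction to the outer inequality $(\alpha')^{*}\unrhd\beta'$ then gives that $(\alpha',\beta')$ is a bipartite graphic pair, as required.

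I do not expect any real obstacle here; the entire content of the argument is the bookkeeping around conjugation and the observation that Gale--Ryser both produces and consumes dominance inequalities. The only point where one must be a little careful is checking that all pairs appearing in the chain are pairs of partitions of the same integer $n$, which is why I would explicitly note at the start that conjugation preserves the total. Once that is in place, the proof is essentially a one-line transitivity.
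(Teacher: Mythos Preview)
Your proposal is correct and is essentially identical to the paper's own proof: apply Theorem~\ref{thm:GR} to each of the three hypotheses to obtain $(\alpha')^{*}\unrhd\alpha^{*}$, $\alpha^{*}\unrhd\beta$, and $\beta^{**}=\beta\unrhd\beta'$, chain them by transitivity, and apply Theorem~\ref{thm:GR} in the converse direction. The only difference is that you spell out the bookkeeping about conjugation preserving the total, which the paper leaves implicit.
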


\begin{proof}
Because $(\alpha\rq{},\alpha^*)$ is a bipartite graphic pair, by Theorem~\ref{thm:GR}, $\alpha\rq{}^*\unrhd\alpha^*$. By the same arguments, $\alpha^*\unrhd\beta$ and $\beta^{**}\unrhd\beta\rq{}$. We have that  $\alpha\rq{}^*\unrhd\beta\rq{}$, and by Theorem~\ref{thm:GR}, $(\alpha\rq{},\beta\rq{})$ is a bipartite graphic pair. 
\end{proof}

\subsection{The algorithm for Edge Editing to a Connected Regular Graph}\label{sec:FPT-alg}
Let $(G,d,k)$ be an instance of \textsc{Edge Editing to a Connected Regular Graph}.
We assume that $k\geq 1$, as otherwise the problem is trivial. If $d\leq 3k+1$, then we solve the problem in time $O^*(k^{O(k)})$ by Theorem~\ref{thm:kernel}.
From now it is assumed that $d>3k+1$. 
Let $Z=\{v\in V(G)|d_G(v)\neq d\}$.

First, we check whether $|Z|\leq 2k$ and stop and return a NO-answer otherwise using the observation that each edge deletion (addition respectively) decreases (increases respectively) the degrees of two its end-vertices by one. 
From now  we  assume that $|Z|\leq 2k$.
Denote by $G_1,\ldots,G_p$ the components of $G-Z$.

We say that two components $G_i,G_j$ have the \emph{same type}, if for any $z\in Z$, either $|N_G(z)\cap V(G_i)|=|N_G(z)\cap V(G_j)|\leq k$ or $|N_G(z)\cap V(G_i)|>k$ and $|N_G(z)\cap V(G_j)|>k$.
Denote by $\Theta_1,\ldots,\Theta_t$ the partition of $\{G_1,\ldots,G_p\}$ into classes according to this equivalence relation.
 Observe that the number of distinct types is at most $(k+2)^{2k}$. Notice also that for any solution $(D,A)$, the graph $H(D,A)$ contains vertices of at most $2k$ components $G_1,\ldots,G_p$. 

The general idea of the algorithm is to guess the structure of a possible solution $(D,A)$ (if it exists).  We guess the edges of $D$ and $A$ that join the vertices of $Z$. Then we guess the number and the types of components of $G-Z$ that contain vertices of $H(D,A)$. For them, we guess the number of edges that join these components with each other
and with each vertex of $Z$.  Notice that the edges of $A$ between distinct components of $G-Z$ should form a bipartite graph. Hence, we guess some additional conditions that ensure that such a graph can be constructed.  
Then for each guess, we check in polynomial time whether we have a solution that corresponds to it. The main ingredient here is the fact that we can modify the components of $G-Z$ without destroying their connectivity.  We construct partial solutions for some components of $G-Z$ and then ``glue\rq{}\rq{} them together.

Let $Z=\{z_1,\ldots,z_r\}$. 
We define records $L=(s,\Theta,C,R,D_Z,A_Z)$, where 
\begin{itemize}
\item $0\leq s\leq \min\{2k,p\}$ is an integer, 
\item $\Theta$ is an $s$-tuple $(\tau_1,\ldots,\tau_s)$ of integers and $1\leq \tau_1\leq\ldots\leq\tau_s\leq t$; 
\item $C$ is a $s\times s$ table of bipartite graphic pairs $(\alpha_{j,h},\beta_{j,h})$ with the sum of elements of $\alpha_{j,h}$ denoted $c_{j,h}$
such that $\alpha_{j,h}=\beta_{h,j}$, $0\leq c_{j,h}\leq k$ and $c_{j,j}=0$ for $j,h\in\{1,\ldots,s\}$,  notice that it can happen that $c_{j,h}=0$ and it is assumed that $(\alpha_{j,h},\beta_{j,h})=(\emptyset,\emptyset)$ in this case;
\item $R$ is $r\times s$ integer matrix with the elements $r_{j,h}$ such that $-k\leq r_{j,h}\leq k$ for $j\in\{1,\ldots,r\}$ and $h\in\{1,\ldots,s\}$;
\item $D_Z\subseteq E(G[Z])$; and
\item $A_Z\subseteq \binom{Z}{2}\setminus E(G[Z])$. 
\end{itemize}
Let $(D,A)$ be a solution for $(G,d,k)$.
We say that $(D,A)$ \emph{corresponds} to $L$ if 
\begin{itemize}
\item[i)] the graph $H(D,A)$ contains vertices from exactly $s$ components $G_{i_1},\ldots,G_{i_s}$ of $G-Z$;
\item[ii)] $G_{i_j}\in \Theta_{\tau_j}$ for $j\in\{1,\ldots,s\}$;
\item[iii)] for $j,h\in\{1,\ldots,s\}$, $A$ has exactly $c_{j,h}$ edges between $G_{i_j}$ and $G_{i_h}$ if $j\neq h$;  
\item[iv)] for any $j\in\{1,\ldots,r\}$ and $h\in\{1,\ldots,s\}$, $|\{z_jx\in A|x\in V(G_{i_h})\}|-|\{z_jx\in D|x\in V(G_{i_h})\}|=r_{j,h}$;
\item[v)] $D\cap E(G[Z])=D_Z$;
\item[vi)] $A\cap \binom{Z}{2}=A_Z$.   
\end{itemize} 

It is straightforward to verify that the number of all possible records $L$ is at most $k^{O(k^3)}$. We consider all such records, and for each $L$, we check whether $(G,d,k)$ has a solution that corresponds to $L$. If we find a solution for some $L$, then we stop and return it. Otherwise, if we fail to find any solution, we return a NO-answer. From now we assume that $L$ is given.

For $i\in\{1,\ldots,p\}$,  a given $r$-tuple $Q=(q_1,\ldots,q_r)$ and 
$\ell$-tuple $Q\rq{}=(q_1\rq{},\ldots,q_{\ell}\rq{})$, where 
 $-k\leq q_1,\ldots,q_r\leq k$, $\ell\leq k$ and $1\leq w_1,\ldots,w_{\ell}\leq k$, we consider 
an auxiliary instance $\Pi(i,Q,Q\rq{})$ of {\sc Edge Editing to a Graph of Given Degrees with Costs} defined as follows. We consider the graph $G[Z\cup V(G_i)]$, delete the edges between the vertices of $Z$, and add a set of $\ell$ isolated vertices $W=\{w_1,\ldots,w_{\ell}\}$. Each vertex $w_j$,  we say that it \emph{corresponds to $q_j\rq{}$} for $j\in\{1,\ldots,\ell\}$.
Denote the obtained graph by $F_i$. We set  $\delta(v)=d$ if $v\in V(G_i)$, $\delta(z_j)=d_{F_i}(z_j)+q_j$ for $j\in\{1,\ldots,r\}$, and
$\delta(w_j)=q_j\rq{}$ for $j\in\{1,\ldots,\ell\}$. We set $\rho(uv)=k+1$ if $u,v\in Z\cup W$, and $\rho(uv)=1$ for all other pairs of vertices of $\binom{V(F_i)}{2}$. 
Observe that it can happen that $\delta(z_j)<0$ for some $j\in\{1,\ldots,r\}$. In this case we assume that $\Pi(i,Q,Q\rq{})$ has NO-answer. In all other cases we solve $\Pi(i,Q,Q\rq{})$ and find a solution of minimum editing cost $c(i,Q,Q\rq{})$ using Theorem~\ref{thm:edit-cost}. If we have a NO-instance or $c(i,Q,Q\rq{})>k$, then we set $c(i,Q,Q\rq{})=+\infty$. We need the following property of the solutions. 
The proof 
is based on Lemma~\ref{lem:alt} and uses the fact that for a solution $(D,A)$, $H(D,A)$ can be covered by edge-disjoint $(D,A)$-alternating trails.   

\begin{lemma}\label{lem:connect}
If $c(i,Q,Q\rq{})\leq k$, then any solution for $\Pi(i,Q,Q\rq{})$ of cost at most $k$ has no edges between vertices of $Z\cup W$ and 
there is a solution $(A,D)$  for $\Pi(i,Q,Q\rq{})$ of cost $c(i,Q,Q\rq{})\leq k$ such that if $F\rq{}=F_i-D+A$, then any $u,v\in V(G_i)$ are in the same component of $F\rq{}$. Moreover, such a solution can be found in polynomial time.
\end{lemma}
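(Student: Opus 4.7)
The first claim is immediate from the cost structure: every edge $uv$ with $u,v\in Z\cup W$ has $\rho(uv)=k+1>k$, so no solution of cost $\leq k$ can touch such an edge. In particular every edge of $D\cup A$ has at least one endpoint in $V(G_i)$ and cost exactly $1$.

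For the second claim I would start from any min-cost solution $(D,A)$ produced by Theorem~\ref{thm:edit-cost} and post-process it into one whose editing $F'=F_i-D+A$ keeps $V(G_i)$ inside a single component. The key structural observation is that in $\Pi(i,Q,Q')$ every $v\in V(G_i)$ satisfies $d_{F_i}(v)=d=\delta(v)$, so the degree-mismatch set of Lemma~\ref{lem:alt} applied to $\Pi(i,Q,Q')$ is contained in $Z\cup W$. Hence by Lemma~\ref{lem:alt} (ii) and (iv), every $v\in V(G_i)\cap V(H(D,A))$ has equal numbers of incident $D$- and $A$-edges, and $H(D,A)$ admits a polynomial-time computable decomposition into edge-disjoint $(D,A)$-alternating trails $\mathcal{T}$ whose non-closed members end in $Z\cup W$.

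The swap step: among min-cost solutions, choose $(D,A)$ minimizing the number of components of $F'$ meeting $V(G_i)$, breaking ties by minimizing $|D|+|A|$. Assume for contradiction this number is at least $2$ and let $K_1,K_2$ be two such components. Connectivity of $G_i$ provides an edge $xy\in E(G_i)$ with $x\in K_1\cap V(G_i)$ and $y\in K_2\cap V(G_i)$; necessarily $xy\in D$, else $xy\in F'$ would already connect $K_1$ and $K_2$. Let $T\in\mathcal{T}$ be the trail through $xy$. Since $x,y\notin Z\cup W$ neither is a terminal of $T$, so $T$ contains an $A$-edge $xa$ preceding $xy$ and an $A$-edge $yb$ following $xy$; as $xa,yb\in F'$ we have $a\in K_1$ and $b\in K_2$. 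Replacing $A$ by $A'=(A\setminus\{xa,yb\})\cup\{xb,ya\}$ preserves every degree (each of $x,y,a,b$ trades one $A$-incidence for another), preserves cost (all four edges have cost $1$), and the new edge $xb\in F''$ joins $K_1$ to $K_2$, contradicting the choice of $(D,A)$.

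I expect the main obstacle to lie in the degenerate cases of this swap — when $a=b$, when one of $xb,ya$ is already an edge of $F_i$ or of $A$, or when $T$ loops back through $x$ or $y$. In each such situation I would argue that instead of the four-edge swap one can perform a strictly shorter local modification (for example dropping $xy$ from $D$ together with $xa$ and $yb$ from $A$ and inserting at most one replacement edge) that preserves degrees, does not increase cost, and strictly decreases the secondary measure $|D|+|A|$, again contradicting minimality; every corner case thus collapses back to the well-defined swap via the trail decomposition. Iterating the rearrangement at most $|V(F_i)|$ times, each step polynomial via Lemma~\ref{lem:alt}, yields the desired connected solution in polynomial time.
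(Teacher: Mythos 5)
Your first claim and the overall plan (decompose $D\cup A$ into alternating trails via Lemma~\ref{lem:alt}, then repair a deleted edge $xy\in E(G_i)$ whose endpoints fall into different components by a local exchange) match the paper. However, the core of your argument has a genuine gap, and it is exactly the case you flag as a ``degenerate'' afterthought that is the hard part. Your four-edge swap $A'=(A\setminus\{xa,yb\})\cup\{xb,ya\}$ need not reduce the number of components of $F'$ meeting $V(G_i)$: if $xa$ is a bridge of $K_1$ and $yb$ is a bridge of $K_2$ (e.g.\ when $a,b\in Z\cup W$ and $xa,yb$ are their only incident edges of $F'$, which is precisely the situation of a trail $u,ux,x,xy,y,yv,v$ with terminals $u,v\in Z\cup W$), the swap merely re-pairs the four pieces $X_1\cup B_1$ and $Y_1\cup A_1$ into two components again, both still containing vertices of $V(G_i)$, so your extremal choice yields no contradiction and the iteration need not terminate. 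Your fallback --- drop $xy$ from $D$ and $xa,yb$ from $A$ and insert one edge $ab$ --- is also unavailable in this case, because $a,b\in Z\cup W$ and $\rho(ab)=k+1$ forbids it (and it would change the degrees of $u,v$ toward $Z\cup W$ in a way the record $L$ does not allow anyway).

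The missing idea, which the paper supplies, is that in this hard case you must change \emph{which edge of $G_i$ is deleted}, not just rearrange the added edges. The paper splits into cases according to whether $u$ (resp.\ $v$) has a second edge of $F'$ into the component of $G_i-D$ containing $x$ (resp.\ $y$); when neither does, it invokes the standing assumption $d>3k+1$ of Section~\ref{sec:FPT-alg}: since at most $k$ edges are deleted and every vertex of $G_i$ has degree $d$ in $G$, each component $F_1$ of $G_i-D$ satisfies $\sum_{w\in V(F_1)}d_{F_1}(w)>2(|V(F_1)|-1)$, hence contains a cycle and so a non-bridge edge $x'y'$. Replacing $xy$ by $x'y'$ in $D$ and $ux,vy$ by $ux',vy'$ in $A$ keeps the cost and degrees, and the new deleted edge has both endpoints in the connected graph $F_1-x'y'$. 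Your proposal never uses $d>3k+1$, which is a strong sign something is missing: for small $d$ the conclusion can genuinely fail (e.g.\ if $G_i$ is a single edge that must be deleted), so no argument that ignores this hypothesis can be completed.
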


\begin{proof}
Let $(A,D)$ be a solution of minimum cost. Because $\rho(u,v)=k+1$ for any $u,v\in Z\cup W$, $u\neq v$, $A$ cannot have edges between  vertices of $Z\cup W$. 

Let $F\rq{}=F_i-D+A$.
Consider the graph $H(D,A)$ defined by $(D,A)$. By Lemma~\ref{lem:alt},
$H(D,A)$ can be covered by a family of edge-disjoint $(D,A)$-alternating 
trails $\mathcal{T}$, and each non-closed trail in $\mathcal{T}$ has its end-vertices in $Z\cup W$. 
Recall that $H(D,A)$ has no edges $uv$ for $u,v\in Z\cup W$. Hence, trails have no such edges as well. Because $(D,A)$ is a solution of minimum cost, $\mathcal{T}$ has no $(D,A)$-alternating closed trails, as otherwise the edges of such a trail could be excluded from $D$ and $A$ respectively without changing the degrees. 

We say that a trail $P\in \mathcal{T}$ is \emph{simple} if it begins and ends by edges from $A$ and has the unique edge from $D$, and this edge is in $G_i$. 

Let $xy\in D$ be an edge of a non-simple trail $P$ 
such that $x,y\in V(G_i)$. Since $P$ is not a simple trail, $xy$ is in a subtrail $u,ux,x,xy,y,yv,v$, where $ux,yv\in A$ and $u\in V(G_i)$ or $v\in V(G_i)$. By symmetry, assume that $v\in V(G_i)$. We have that $uv\in E(F\rq{})$, because otherwise $u,ux,x,xy,y,yv,v$ could be replaced by $u,uv,v$ in $P$, and it would give a better solution, as $\rho(uv)=1$. Moreover, observe that if we modify the solution by replacing some simple trail by another simple trail with the same end-vertices, this modification cannot remove $uv$, as it would again imply that we can improve the solution. Then we have that the end-vertices of the deleted edge $xy$ are connected by a path in $F\rq{}$.

Let $G_i\rq{}$ be the graph obtained from $G_i$ by the deletion of the edges of $D$. Denote by $F_1,\ldots,F_h$ the components of $G_i'$. If for any simple trail $P=u,ux,x,xy,y,yv,v$, $x$ and $y$ are in the same component, then the claim is proved, because $x$ and $y$ are joined by a path in $G_i'$ and, therefore, in $F'$. Assume that for some simple trail  $P=u,ux,x,xy,y,yv,v$, $x$ and $y$ are in different components. Without loss of generality we assume that $x\in V(F_1)$ and $y\in V(F_2)$.  We have the following case.

\medskip
\noindent{\bf Case 1.} The vertex $u$ is joined by an edge with a vertex in $F_2$ or  
$v$ is joined by an edge with a vertex in $F_1$ in the graph $F'$. Then $x$ and $y$ are joined by a path in $F'$.

\medskip
\noindent{\bf Case 2.} The vertex $u$ is not adjacent to the vertices of $F_2$ and $v$ is not adjacent to the vertices of $F_1$, but there is $x'\in V(F_1')$ such that $x'\neq x$ and $ux'\in E(F')$ or  there is $y'\in V(F_y')$ such that $y'\neq y$ and $uy'\in E(F')$. We replace $P$ by $P'=u,uy,y,yx,x,xv,v$ and modify $A$ by replacing $ux,vy$ by $uy,vx$. Clearly, this modification gives us another solution with the same cost, and now $x$ and $y$ are joined by a path in $F'$ that is modified respectively.

\medskip
\noindent{\bf Case 3.} The vertex $u$ is not adjacent to the vertices of $F_2$, $v$ is not adjacent to the vertices of $F_1$, and $ux,vy$ are the unique edges that join $u,v$ with $F_1,F_2$ respectively in $F'$.
Since $(D,A)$ is a solution, there are at most $k/2$ edges of $D$ in $G_i$. 
Also all vertices of $G_i$ have the same degree $d$ in $G$.
Therefore,
$\sum_{v\in V(F_1)}d_{F_1}(v)\geq (d-2k)|V(F_1)|-k\geq (k+1)|V(F_1)|-k>(k+1)(|V(F_1)|-1)\geq 2(|V(F_1)|-1) $, and $F_1$ has a cycle. Hence, $F_1$ has an edge $x'y'$ such that $F_1'-x'y'$ is connected.  
We replace $P$ by $P'=u,ux',x',x'y',y',y'v,v$ and modify $D$ by replacing $xy$ by $x'y'$ and 
 $A$ by replacing $ux,vy$ by $ux',vy'$. Clearly, this modification gives us another solution with the same cost, and now $x'$ and $y'$ are joined by a path in $F'$ that is modified respectively.

By applying the same modification for all simple trails, we obtain the solution with the property that for any $xy\in D\cap E(G_i)$, $x$ and $y$ are joined by a path in $F'$. 

To complete the proof, it remains to observe that by Theorem~\ref{thm:edit-cost}, an initial solution of minimum cost can be found in polynomial time. Then $\mathcal{T}$ can be constructed in polynomial time by Lemma~\ref{lem:alt}. Finally, it is straightforward to see that simple paths can be modified in polynomial time. 
\end{proof}

Now we are ready to describe the algorithm that for a record $L=(s,\Theta,C,R,D_Z,A_Z)$, checks whether  $(G,d,k)$ has a solution that corresponds to $L$.

First, we check whether the modification of $G$ with respect to $L$ would satisfy the degree restrictions for $Z$, as otherwise we have no solution. Also the number of edges between $G_1,\ldots,G_p$ should be at most $k$.

\medskip
\noindent
{\bf Step 1.} Let $\hat{G}$ be the graph obtained from $G$ by the deletion of the edges of $D_Z$ and the addition the edges of $A_Z$.  If for any $j\in\{1,\ldots,r\}$, $d_{\hat{G}}(z_j)+\sum_{h=1}^sr_{j,h}\neq d$, then stop and return a NO-answer. 

\medskip
\noindent
{\bf Step 2.} If $\sum_{1\leq j<h\leq s} c_{j,h}>k$, then stop and return a NO-answer. 

\medskip
From now we assume that the degree restrictions for $Z$ are fulfilled and the number of added edges between the components of $G-Z$ should be at most $k$.

\medskip
\noindent
{\bf Step 3.} Construct an auxiliary weighted bipartite graph $F$, where $X=\{x_1,\ldots,x_s\}$ and $Y=\{y_1,\ldots,y_p\}$ is the bipartition of the vertex set. For $i\in \{1,\ldots,s\}$ and $j\in\{1,\ldots,p\}$, we construct an edge $x_iy_j$ if $G_j\in \Theta_{\tau_i}$.  To define the weight $w(x_iy_j)$, we consider $\Pi(j,Q_j,Q_j\rq{})$ where
$Q_j=(r_{1,i},\ldots,r_{r,i})$ and $Q_j\rq{}$ is the sequence obtained by the concatenation of non-empty sequences $\alpha_{j,1}^*,\ldots,\alpha_{j,s}^*$.  
Denote by $W_{j,h}$ the set of vertices of the graph in $\Pi(j,Q_j,Q_j\rq{})$ corresponding to the elements of $\alpha_{j,h}^*$.
Notice that by Step 2, $Q_j\rq{}$ has at most $k$ elements. We set $w(x_iy_j)=c(j,Q_j,Q_j\rq{})$. Observe that some edges can have infinite weights.

\medskip
\noindent
{\bf Step 4.} Find a perfect matching $M$ in $F$ with respect to $X$ of minimum weight. If $F$ has no perfect matching of finite weight, then the algorithm stops and returns a NO-answer.  
Assume that $M=\{x_1y_{j_1},\ldots x_sy_{j_s}\}$ is a perfect matching of minimum weight $\mu<+\infty$. 
If $\mu-\sum_{1\leq j<h\leq s}c_{j,h}+|D_Z|+|A_Z|>k$, then we stop and return a NO-answer. 
\medskip

Now we assume that $M$ has weight at most $k$.

\medskip
\noindent
{\bf Step 5.}
Consider the solutions $(D_{i},A_{i})$ of cost $c(j,Q_{j_i},Q_{j_i}\rq{})$ for  $\Pi(j_i,Q_{j_i},Q_{j_i}\rq{})$
for $i\in\{1,\ldots,s\}$. 

Set $D=D_Z\cup( \cup_{i=1}^sD_i)$.

Construct a set $A$ as follows. 
For $i\in \{1,\ldots,s\}$, denote by $A_i\rq{}$ the set of edges of $A_i$ with the both end-vertices in $V(G_{j_i})\cup Z$, and let $A_{i,h}$ be the subset of edges that join $G_{j_i}$ 
with $W_{j_i,j_h}$ for $h\in\{1,\ldots,s\}$, $h\neq j$.
Initially we include in $A$ the set $\cup_{i=1}^sA_i\rq{}$. For each pair of indices $i,h\in\{1,\ldots,s\}$, such that $i<h$ and $c_{j_i,j_h}>0$, 
consider graphs induced by $A_{i,h}$ and $A_{h,i}$ respectively, 
and denote by $\alpha_{i,h}\rq{}$ and $\alpha_{h,i}\rq{}$ respectively the degree sequences of these graphs for the vertices in $G_{j_i}$ and $G_{j_h}$ respectively. By the construction of the problems    
$\Pi(j,Q_j,Q_j\rq{})$, $(\alpha_{i,h}\rq{},\alpha_{i,h}^*)$ and  $(\alpha_{h,i}\rq{},\alpha_{h,i}^*)$ are bipartite graphic pairs. Recall that $\beta_{i,h}=\alpha_{h,i}$ and $(\alpha_{i,h},\beta_{i,h})$ is a bipartite graphic pair. By Lemma~\ref{lem:graphic}, $(\alpha_{i,h}\rq{},\alpha_{h,i}\rq{})$ is a bipartite graphic pair. Construct a bipartite graph that realizes  $(\alpha_{i,h}\rq{},\alpha_{h,i}\rq{})$
using Lemma~\ref{lem:graphic-poly} and denote its set of edges by $A_{i,h}\rq{}$. We use the vertices of $G_{j_i}$ and $G_{j_h}$ incident with the vertices of $A_{i,h}$ and $A_{h,i}$ as the sets of bipartition and construct our bipartite graph in such a way that for each vertex, the number of edges of  $A_{i,h}\rq{}$ incident to it is the same as the number of edges of  $A_{i,h}$ or $A_{h,i}$ respectively incident to this vertex. Then we include the edges of $A_{i,h}\rq{}$ in $A$.

\medskip
\noindent
{\bf Step 6.} For each $i\in\{1,\ldots,s\}$, do the following. Consider the set of vertices $W_i=\{w_1,\ldots,w_{\ell}\}$ of $\cup_{h=1}^sV(G_{j_h})\setminus V(G_{j_i})$ incident to the edges of $A$ that join $G_{j_i}$ with these vertices and let $q_h\rq{}$ be the number of edges of $A$ that join $G_{j_i}$ with $w_{h}$ for $h\in\{1,\ldots,\ell\}$. Consider  
 $\Pi(j_i,Q_{i},Q_i\rq{})$ where
$Q_i=(r_{1,j_i},\ldots,r_{r,j_i})$ and $Q_i\rq{}=(q_1\rq{},\ldots,q_{\ell}\rq{})$. Using Lemma~\ref{lem:connect}, find a solution $(D_i,A_i)$ for $\Pi(j_i,Q_i,Q_i\rq{})$ of minimum cost. Modify $(D,A)$ by replacing the edges of $D$ and $A$ incident to the vertices of $G_{j_i}$ by the edges of $D_i$ and $A_i$ respectively identifying the set $W_i$ and the set of vertices $W$ in  $\Pi(j_i,Q_{i},Q_i\rq{})$. 

\medskip
\noindent
{\bf Step 7.} Let $G\rq{}=G-D+A$. If $G$ is connected, then return $(D,A)$. Otherwise return a NO-answer. 

\medskip
Suppose that the algorithm produces the sets $(D,A)$ in Step 7. By the description of the algorithm, for any vertex $v\in V(G\rq{})$, $d_{G\rq{}}(v)=d$. Notice that for the sets $D$ and $A$ obtained in Step 5, $|D|+|A|=\mu-\sum_{1\leq j<h\leq s}c_{j,h}+|D_Z|+|A_Z|\leq k$, and in Step 6 $|D|+|A|$ could be only decreased. Since $G\rq{}$ is connected, we conclude that $(D,A)$ is a solution for $(G,d,k)$. It is straightforward to see that $(D,A)$ corresponds to $L$.

Assume now that $(G,d,k)$ has some solution $(D,A)$. The graph $H(D,A)$ contains vertices of some components $G_{i_1},\ldots,G_{i_s}$ of $G-Z$ for $s\leq 2k$. Assume that $G_{i_h}\in \Theta_{\tau_h}$ for $h\in\{1,\ldots,s\}$, and let $\Theta=(\tau_1,\ldots,\tau_s)$.  
Let $D_Z=D\cap E(G[Z])=D_Z$ and 
$A_Z=A\cap \binom{Z}{2}$.
For $j\in\{1,\ldots,s\}$, let $A_j$ be the subset of edges of $A\setminus A_Z$ with the both end-vertices in $V(G_{i_j})\cup Z$ and $D_j=D\cap E( G[V(G_{i_j})\cup Z] )\setminus D_Z$. Consider each pair of indices $j,h\in\{1,\ldots,s\}$, $j\neq h$.  Denote by $A_{j,h}$ the set of edges of $A$ that join $G_{i_j}$ and $G_{i_h}$. The set $A_{j,h}$ induces a bipartite graph. Let $\alpha_{j,h}$ and $\beta_{j,h}$ be the graphic sequences of the vertices of this graph in $G_{i_j}$ and $G_{i_h}$ respectively
(if $A_{j,h}=\emptyset$, then  $\alpha_{j,h}=\beta_{j,h}\emptyset$). Also let $c_{j,h}=|A_{j,h}|$.    Denote by $C$ the table of pairs $(\alpha_{j,h},\beta_{j,h})$.
For any $j\in\{1,\ldots,r\}$ and $h\in\{1,\ldots,s\}$, let $r_{j,h}=|\{z_jx\in A|x\in V(G_{i_h})\}|-|\{z_jx\in D|x\in V(G_{i_h})\}|$. Denote by $R$ the matrix with these elements.  We consider the record $L=(s,\Theta,C,R,D_Z,A_Z)$ and analyze our algorithm for it. 

It is straightforward to see that the algorithm does not stop in Steps 1 and 2. 

Consider now the auxiliary graph $F$ constructed in Step 3. Clearly, $x_1y_{i_1},\ldots,x_sy_{i_s}\in E(F)$. We claim that for each $j\in \{1,\ldots,s\}$, 
$w(x_jy_{i_j})\leq |D_j|+|A_j|+|A_{j,1}|+\ldots+|A_{j,j-1}|+|A_{j,j+1}|+\ldots+|A_{j,s}|\leq k$. To see this, notice that for any non-increasing sequence of positive integers $\alpha$, $(\alpha,\alpha^*)$ is a bipartite graphic pair by Theorem~\ref{thm:GR}. It implies that a feasible solution $(D\rq{},A\rq{})$ for 
$\Pi(i_j,Q_{i_j},Q_{i_j}\rq{})$ can be constructed as follows. Let $D\rq{}=D_{j}$. To construct $A\rq{}$, we include first in this set the edges of $A_{j}$. 
For each $h\in\{1,\ldots,s\}$ such that $h\neq j$ and $c_{j,h}>0$, 
the vertices in $W_{j,h}$ are joined with the vertices of $G_{i_j}$ incident to the edges of $A_{j,h}$ by a set of edges $A_{j,h}\rq{}$ in such a way that 
the graph induced by $A_{j,h}\rq{}$ realizes the pair $(\alpha_{j,h},\alpha_{j,h}^*)$ and the number of edges of $A_{j,h}\rq{}$ incident to each vertex of $G_{i_j}$ is the same as the number of edges of $A_{j,h}$ incident to it. By the definition of $\Pi(i_j,Q_{i_j},Q_{i_j}\rq{})$, we have a feasible solution. 
It follows that  $\{x_1y_{i_1},\ldots,x_sy_{i_s}\}$ is a perfect matching in $F$ of weight at most $|D|+|A|-|D_Z|-|A_Z|+\sum_{1\leq j<h\leq s}c_{j,h}$. 
Therefore, $F$ has a perfect matching $M=\{x_1y_{j_1},\ldots x_sy_{j_s}\}$ of minimum weight $\mu$ such that $\mu-\sum_{1\leq j<h\leq s}c_{j,h}+|D_Z|+|A_Z|\leq k$. In particular, it means that we do not stop in Step 4.

Denote by $D\rq{},A\rq{}$ the sets constructed in Step 5 (and denoted $D$ and $A$ respectively in the description). By the construction, for the graph $G\rq{}=G-D\rq{}+A\rq{}$, $d_{G\rq{}}(v)=d$ for any $v\in V(G\rq{})$. Moreover, by the construction of Step 6, the modifications of $(D\rq{},A\rq{})$ maintain this property. Hence, to show that we obtain a solution in Step 7, it remains to show that Step 6 ensures that we get a connected graph $G\rq{}$ if we delete the edges of the modified set $D\rq{}$ and add the edges of the modified set $A\rq{}$.

For each $i\in\{1,\ldots,s\}$, we inductively prove the following. Let $D^i,A^i$ be the sets  of edges constructed in Step 6 after executing the first $i$ modification of the sets obtained in Step 5.  Let also $G^i=G-D^i+A^i$.
Then for any $h\in\{1,\ldots,i\}$ and any $u,v\in V(G_{j_i})$, $G^i$ has a $(u,v)$-path $P$ such that $P$ has no edges in $G^i- V(G_{j_1})\cup\ldots\cup V(G_{j_i})$ (but $P$ can have vertices
in this set).  
For $i=1$, the claim immediately follows from Lemma~\ref{lem:connect}. Assume now that $i>1$. 
If $u,v\in V(G_{j_i})$, then we again apply Lemma~\ref{lem:connect}. Let $u,v\in V(G_{j_h})$ for $h<i$. By the inductive hypothesis, $u$ and $v$ could be connected by some path $P$ without edges in  $G^{i-1}- V(G_{j_1})\cup\ldots\cup V(G_{j_{i-1}})$. Suppose that this path is destroyed by the further modifications. It can happen only if $P$ has subpaths $xyz$ where $x,z\in 
V(G_{j_1})\cup\ldots\cup V(G_{j_{i-1}})$ and $y\in V(G_{j_i})$ and $xy$ or $yz$ is not in $G^i$. But then there are $y_1,y_2\in V(G_{j_i})$ such that $xy_1,zy_2\in A^i$ and we have the required connectivity between $y_1$ and $y_2$. Then the claim follows.

By this claim, for any $u,v\in V(G_{j_h})$, $u$ and $v$ can be connected by a path in $G\rq{}$ for $h\in\{1,\ldots,s\}$. 
 Recall that $(D,A)$ is a solution for $(G,d,k)$, and $H(D,A)$ has vertices from $G_{i_1},\ldots,G_{i_s}$. For each $h\in \{1,\ldots,s\}$, $G_{i_h}$ and $G_{j_h}$ have the same type. 
Also for each $i\in\{1,\ldots,r\}$, the number of edges that join $z_i$ and $G_{i_h}$ with respect  to the solution $(D,A)$ and the number  of edges that join $z_i$ and $G_{j_h}$ with respect to $(D\rq{},A\rq{})$ is increased (or decreased if the number is negative) by $r_{i,h}$. Because $G_{i_h}$ and $G_{j_h}$ have the same type, there are no edges that join $z_i$ and $G_{i_h}$ with respect  to $(D,A)$ if and only if are no edges that join $z_i$ and $G_{j_h}$ with respect  to $(D\rq{},A\rq{})$ in the graphs obtained by editing. Furthermore, for any
$g,h\in \{1,\ldots,s\}$, $C_{i_g}$ and $C_{i_h}$ in $(D,A)$ and $C_{j_g}$ and $C_{j_h}$ in $(D\rq{},A\rq{})$ are connected by the same number of edges $c_{g,h}$. It implies that if the graph obtained from $G$ by editing with respect to $(D,A)$ is connected, then $G\rq{}$ obtained by editing with respect to $(D\rq{},A\rq{})$ is also connected, and we have that $(D\rq{},A\rq{})$ produced by the algorithm is a solution for $(G,d,k)$. Also we can observe that this solution corresponds to $L$. 

Now we argue that this algorithm is polynomial. Clearly, Steps 1 and 2 can be performed in polynomial time.  The construction of the graph $F$ in Step 3 can be done polynomially, and the weight assignment demands polynomial time, because the construction of the auxiliary problem can be done in polynomial time, and we can solve this problem in polynomial time by Lemma~\ref{lem:connect}.  We can find a perfect matching of minimum weight in $F$ by standard algorithms (see, e.g, \cite{EdmondsK72}). The construction of $(D,A)$ in Step 5 can be done in polynomial time by Lemma~\ref{lem:graphic-poly}. Step 6 is polynomial by Lemma~\ref{lem:connect}. As connectedness in Step 7 can be easily checked, this step is also polynomial. 

To complete the proof of Theorem~\ref{thm:regular}, it remains to observe that to solve an instance of  {\sc Edge Editing to a Graph of Given Degrees with Costs}, we generate at most $k^{O(k^3)}$ records, and run a polynomial algorithm for these records. It follows that {\sc Edge Editing to a Graph of Given Degrees with Costs} can be solved in time $O^*(k^{O(k^3)})$.

\section{Conclusions}
We proved that \textsc{Editing to a Graph of Given Degrees}  has a polynomial kernel of size $O(kd^3(k+d)^2)$.  It is natural to ask whether the size can be improved. 
Also, is the problem \classFPT\ when parameterized by $k$ only? We proved that it holds for the special case $\delta(v)=d$, i.e., for \textsc{Edge Editing to a Connected Regular Graph}.
Another open question is whether \textsc{Editing to a Graph of Given Degrees} (or \textsc{Edge Editing to a Connected Regular Graph})  has a polynomial kernel with the size that depends on $k$ only.

\end{document}